\newcommand{\dg}{\ensuremath{{d}}}
\newcommand{\hdg}{\ensuremath{\hat{d}}}
\newcommand{\hlambda}{\ensuremath{\hat{\lambda}}}
\newcommand{\PolBooksData}{\textsc{PolBooks}\xspace}
\newcommand{\AmazonData}{\textsc{Amazon}\xspace}
\newcommand{\hv}{\ensuremath{\hat{v}}}
\let\oldnl\nl% Store \nl in \oldnl
\newcommand{\nonl}{\renewcommand{\nl}{\let\nl\oldnl}}% Remove line number for one line
\newtheorem{theorem}{Theorem}
\newtheorem{corollary}{Corollary}
\newtheorem{definition}{Definition}
\newtheorem{lemma}{Lemma}
\newtheorem{conjecture}{Conjecture}
\def\DEBUG{true}
\title{Principal Fairness: \\ Removing Bias via Projections}
\author{%
Aris Anagnostopoulos,
   Luca Becchetti,
Adriano Fazzone\\ 
Cristina Menghini,
Chris Schwiegelshohn\\
  Sapienza University of Rome
}
\begin{document}

\maketitle
	
\begin{abstract}
 Reducing hidden bias in the data and ensuring fairness in algorithmic data analysis has recently received significant attention. We complement several recent papers in this line of research by introducing a general method to reduce bias in the data through random projections in a ``fair'' subspace.
    We apply this method to densest subgraph problem. For densest subgraph, our approach based on fair projections allows to recover both theoretically and empirically an almost optimal, fair, dense subgraph hidden in the input data. We also show that, under the small set expansion hypothesis, approximating this problem beyond a factor of 2 is NP-hard and we show a polynomial time algorithm with a matching approximation bound. 
\end{abstract}

\section*{Acknowledgements}
Partially supported by the ERC Advanced Grant 788893 AMDROMA ``Algorithmic and Mechanism Design Research in Online Markets'' and MIUR PRIN project ALGADIMAR ``Algorithms, Games, and Digital Markets''.

\newcommand{\msdk}{\textsc{MSDk}\xspace}

\section{Introduction}\label{se:intro}

The identification of dense subgraphs  is a fundamental primitive in community
detection and graph mining. Given an underlying graph $G=(V,E)$, the
density of a node set $S\subseteq V$ is defined as
$\frac{2\cdot |E\cap S\times S|}{|S|}$.
In most mining scenarios, communities are assumed to have a high
intra-community density versus a lower inter-community density.
In this sense, density is arguably the most natural measure of quality for
evaluating and comparing communities in graphs (see~\cite{ChakrabortyDMG17} for an extensive survey.)

In this paper, we consider the densest subgraph problem with fairness
constraints. Specifically, we are given a binary labeling of the nodes
of the graph $\ell: V\rightarrow \{-1,1\}$. The labeling corresponds to
an attribute that ideally should be uncorrelated with community
membership.
Our goal is to compute a set of nodes $S\subseteq V$ of maximum density
while ensuring that $S$ contains an equal number of representatives of
either label.
The problem has a number of motivating applications, some of which are 
discussed below.

\paragraph{\bf Mitigation of Polarization}
Social networks are very prone to polarization among
users~\cite{Boxell}:  reinforcement of user preferences
can lead to feedback loops.
For example, recommender systems incentivize disagreement minimization,
leading to echo chambers among users with similar preferences.
This problem has been considered for example by~\cite{MuscoMT18}, who 
studied the problem of identifying a graph of connections between 
users (of two different opinions), 
such that polarization and disagreement  are simultaneously minimized.
The notions behind the fair densest subgraph problem are closely
related: Its goal is to maximize agreement while avoiding
polarization\footnote{The paper by~\cite{MuscoMT18} is
similar in spirit, but very different in terms of problem modelling.}. 

\paragraph{\bf Team Formation}
In crowdsourcing, team formation consists in identifying a set of
workers, whose collective skill set incldes all skills that are 
required for processing some given jobs. Lappas
et al.~\cite{LLT09} proposed subgraph density as a way of modeling the
effectiveness of multiple individuals when working together. 
%It is also quite simple to model effectiveness between individuals by using different edge weights. 
%However, when considering larger groups, this may no longer be adequate. One way of measuring group effectiveness is to consider the densest-subgraph problem with hyperedges. Hu et
%al.~\cite{HuWC17}, showed that this problem is tractable, but the
%running times do not scale well when the number of groups is
%large\footnote{If hyperedges correspond to groups of size $s$, there can
%exist up to ${n\choose s}$ hyperedges.}. 
The potential benefits of team diversity are well documented in organizational psychology \cite{Hor05} studies and also  highlighted by recent work (e.g., see \cite{marcolino2013multi} 
and follow-up work). Diversity in turn can be naturally modeled via fairness constraints.

\paragraph{\bf Diversity in Association Rule Mining}
Sozio and Gionis~\cite{SoG10} study dense subgraphs for
association rule mining: Given a set of tags used to label objects, 
the densest subgraph problem allows to determine additional 
related tags that can be used for a better description of the objects. 
It is  common that the tags that are added are semantically identical to 
those already used. 
%We argue that an appropriate labeling of the nodes can ensure that the
%densest subgraph not only contains tags that are not only closely
%related, but also a reasonably unique.
We argue that an appropriate labelling of the tags followed by solving
the  fair densest subgraph problem allows recovery of a set of tags
that are not only closely related, but also unique.

\paragraph{\bf Algorithmic Fairness}
As pioneered by Chierichetti et al.~\cite{CKLV17}, there has recently
been considerable work on clustering data sets using the disparity of
impact measure. Conceptually, the aim is to perform data analysis such
that the resulting clustering or classifier does not discriminate based
on some protected attribute. In our case, finding a densest subgraph
such that a protected attribute is not disparately impacted is
equivalent to the definition of the fair densest subgraph problem.

\subsection{Contributions} 
As it turns out (see Section \ref{sec:problems}), the fair densest
subgraph problem is intractable in general, while its
unconstrained counterpart can be solved optimally through network flow \cite{G84}.
Nevertheless, we have some quantifiable results regarding approximation
algorithms in special cases. If the underlying graph itself is fair, we
can show that there exists a $2$-approximation algorithm. We further 
show that, assuming the widely used small set expansion hypothesis \cite{RaghavendraS10}, 
this is the best possible. We also consider the case where the graph itself is not fair and we instead
aim for a proportional representation.
For this, in our opinion more flexible variant of the problem, we
show that the results for fair graphs can be extended.
 
Although this worst-case behavior is discouraging, the possibility of 
effective algorithms is not ruled out on practical instances. To this end, we
identify properties that, if satisfied by some subgraph of the network 
under consideration, will afford recovery of an approximately fair, 
dense subgraph.
More precisely, our goal in this respect is designing a heuristic that
\begin{itemize}
\item[(a)] has a quantifiable guarantee if the underlying graph is
well-behaved and
\item[(b)] is practically viable.
\end{itemize}
Our main result is a spectral algorithm that satisfies both of these requirements.
In particular, the practical viability of our algorithm underscores that
our notion of a well-behaved graph is a realistic one.
As a candidate application, we considered the scenario of providing
diverse recommendations of high quality, using data from the Amazon
product co-purchasing graph.
Our experiments not only confirm the quality of the output solutions,
but also the scalability of our approach, which
may not be the case for a conventional combinatorial approximation
algorithm.

%Our algorithm is based on a spectral relaxation technique.
%It is well-known that the densest subgraph problem admits a spectral formulation~\cite{KV99} which becomes particularly useful for recovering dense components in noisy graphs~\cite{McS01}. The fairness constraint implies that candidate solutions are drawn from a special subspace $F$. Our algorithm merely projects the adjacency matrix of the graph onto this subspace $F$. Under certain assumptions, the main eigenvector of the resulting matrix is closely correlated with the indicator vector of a dense and fair subgraph.
%

%\noindent
%Finally, we show that without further considerations, the densest fair 
%subgraph problem is at least as hard as the densest $k$-subgraph 
%problem and thus intractable in general. On the positive side, under the 
%assumption that the set of nodes $V$ is fair (i.e. has an equal 
%number of red and blue nodes), we show that a simple $2$-approximation 
%exists and furthermore, under the small set expansion hypothesis, that 
%this approximation factor is optimal unless $P=NP$.

\paragraph{Overview of approach.}
Our approach builds on the 
finding \cite{KV99, McS01} that the densest subgraph problem admits a
spectral formulation.
Specifically, an approximate densest subgraph can be computed by 
selecting nodes for inclusion according to the magnitudes of the  
corresponding entries in the main eigenvector of $G$'s adjacency matrix. 
Unfortunately, this approach does not afford balanced solutions in 
general. In a nutshell, we sidestep this issue by first projecting the 
adjacency matrix onto a suitable ``fair'' subspace, an operation that corresponds to the 
enforcement of ``soft'' fairness constraints.

\noindent 
To see why the conventional spectral approach of~\cite{KV99} may not work\footnote{In fact, this applies
to any approach based on unconstrained maximization of the induced 
subgraph's  density.} and why our approach mitigates the issue,
Figure~\ref{fig:polbooksembedding} presents plots obtained from
Amazon
books on US politics \cite{PolBookNetwork}.
The books are labeled as either conservative or liberal, which corresponds
to the labels $-1$ or $1$.
As described above, a candidate application may be to find a selection
of books that are of interest to multiple readers, while mitigating
potential polarization along political lines.

On the left, we observe the books ordered according
to their corresponding entries in the main eigenvector of the adjacency matrix of the
co-purchase graph.
%\footnote{As will become clear in Section 
%\ref{sec:problems}, the main eigenvector of the adjacency 
%matrix can leveraged to compute a dense subgraph.} 
Books are also colored according to political orientation.
We can observe that, whereas liberal books are well distributed,
conservative ones are clustered. On the right we observe the results after
application of our spectral embedding, which affords recovery of a 
subgraph of the co-purchase graph that is both dense and approximately balanced. 
Note that now conservative books are also
well distributed along the principal component.

\begin{figure*}[h]
\begin{center}
\begin{minipage}{0.45\textwidth}
\includegraphics[scale=0.38]{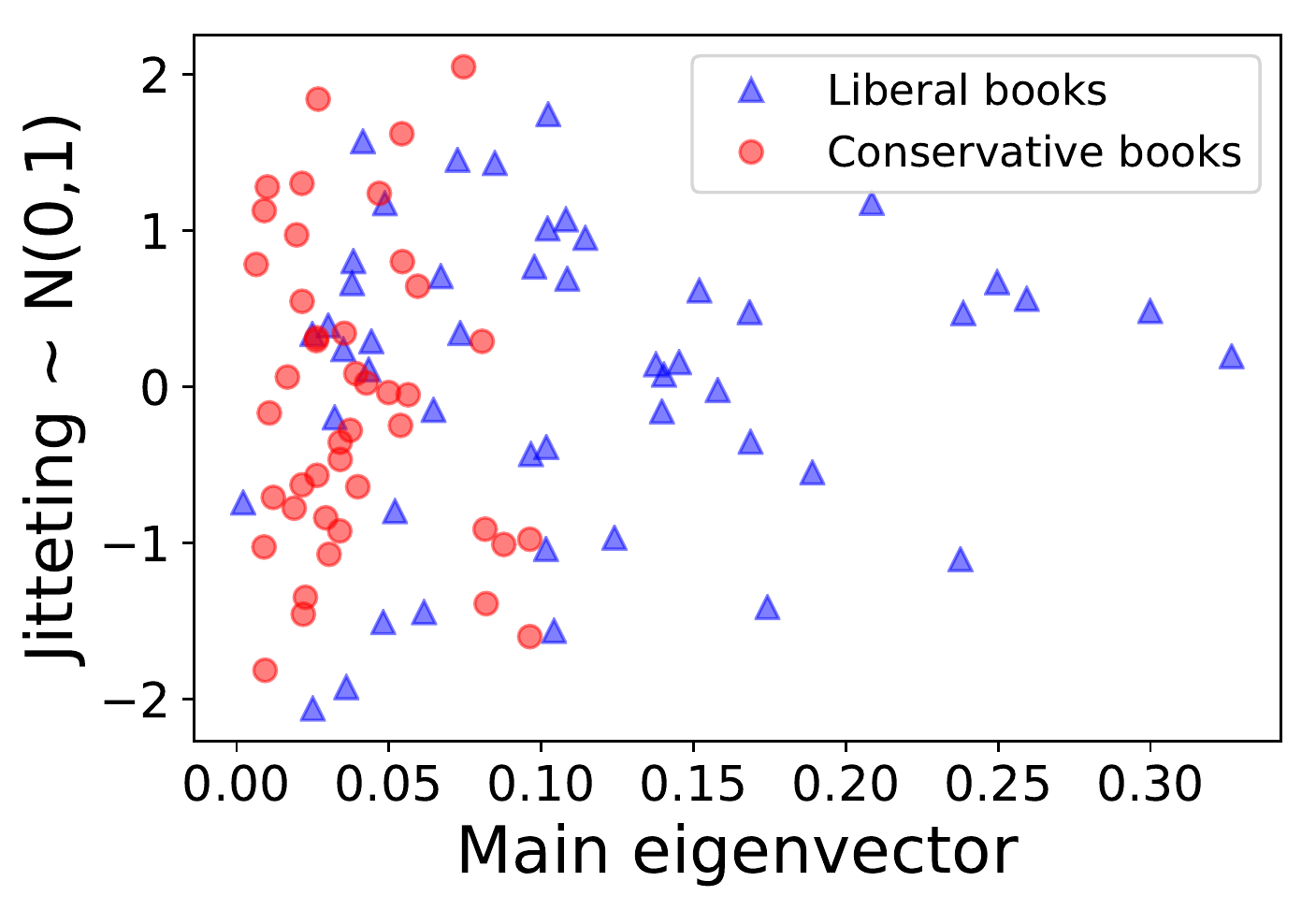}
\end{minipage}
\begin{minipage}{0.45\textwidth}
\includegraphics[scale=0.4]{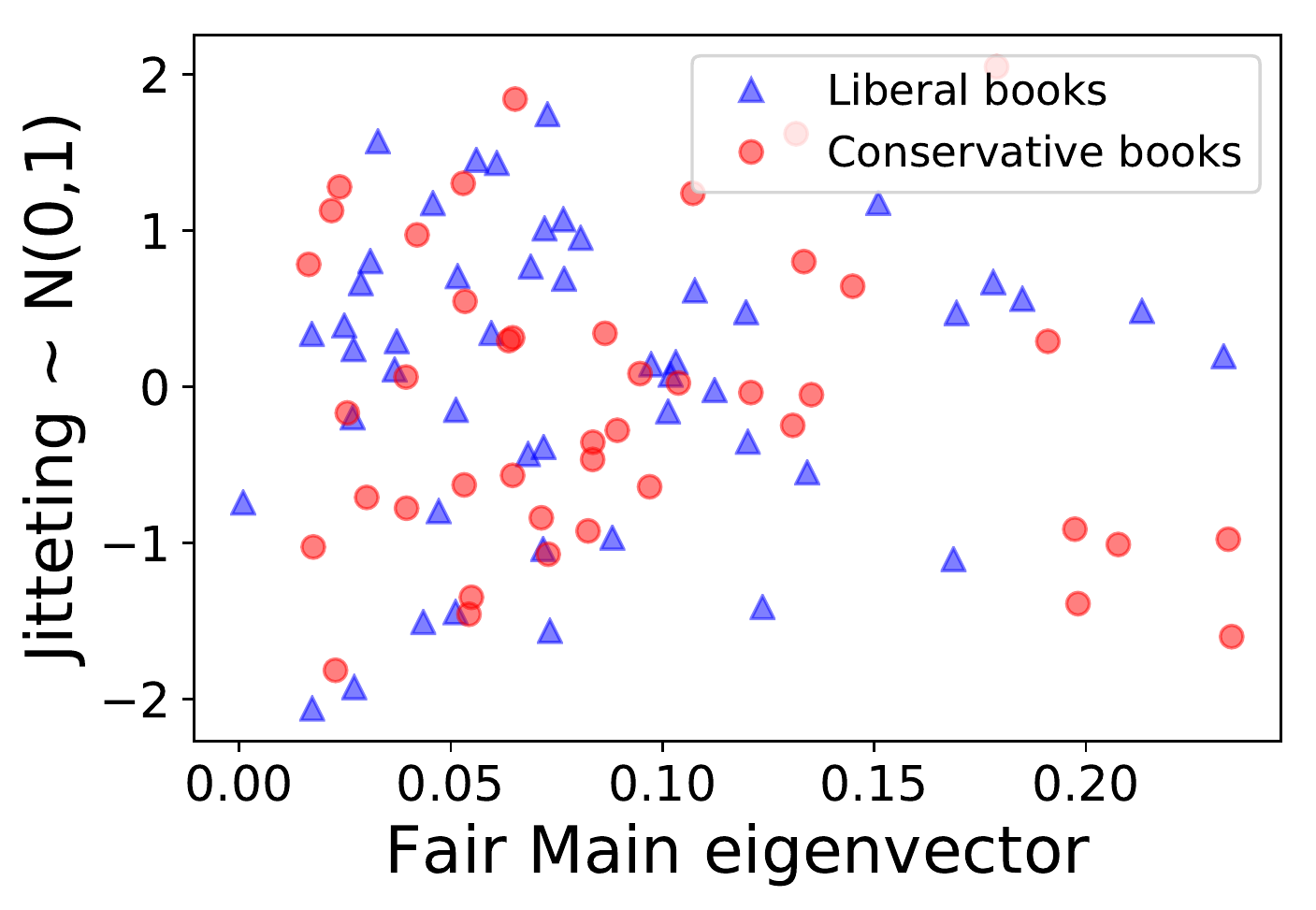}
\end{minipage}
\end{center}
\caption{Projection of books (see Section~\ref{se:experiments}) onto 
the first principal component. (Left) Original data. (Right) Data after 
spectral embedding.
Books are ordered on the $x$ axis according to their corresponding 
entries in the main eigenvector, whereas on the $y$ axis we have random noise
for visualization. 
}
\label{fig:polbooksembedding}
\end{figure*}

\subsection{Related work}
\label{sec:related}
\paragraph{Densest Subgraph}
Identifying dense subgraphs is a key primitive in a number of 
applications; see \cite{fratkin2006motifcut,GibsonKT05,gionis2013piggybacking,TBGGT13}.
The problem can be solved optimally in polynomial
time~\cite{G84}. On the contrary, the fair densest subgraph problem
is highly related to the densest subgraph problem limited to at most $k$
nodes, which cannot be approximated up to a factor of
$n^{1/(\log\log n)^c}$ for some $c>0$ assuming the exponential time
hypothesis~\cite{Manurangsi17} and for which state-of-the-art methods
yield an $O(n^{1/4+\varepsilon})$
approximation~\cite{BhaskaraCCFV10}. 

%The densest subgraph problem consists in finding a set of nodes $S$ such
%that the ratio between the edges of the induced subgraph and $|S|$ is
%maximized. 
%Identifying dense subgraphs is a key primitive in a number of 
%applications \cite{TBGGT13}, ranging from the identification of interest groups or 
%spamming activities in the Web \cite{GibsonKT05}, to improving 
%performance of social networking sites \cite{gionis2013piggybacking}, 
%to identification of regulatory motifs in gene-expression microarray 
%data \cite{fratkin2006motifcut}.
%The problem can be solved optimally in polynomial
%time~\cite{G84, GibsonKT05}. The fair densest subgraph problem
%is highly related to the densest subgraph problem with cardinality
%constraints. Indeed, the densest subgraph of cardinality equal
%to or at most $k$ cannot be approximated to any constant factor assuming hardness of random satisfiability problems~\cite{AAMMW11}. The
%state of the art approximation algorithms are based on spectral methods
%and yield a $O(n^{1/4+\varepsilon})$
%approximation~\cite{BhaskaraCCFV10}. 
%%For the densest subgraph
%%of cardinality at least $k$, there exists an optimal $2$-approximation~\cite{AndersenC09,KS09,Manurangsi18}.

%Perhaps surprisingly, this pessimistic results can be improved assuming that the graph is generated by certain probabilistic processes. In particular, dense subgraphs and cliques may be recoverd from a noisy graph using spectral methods, see~\cite{McS01,Vu14} and references therein.

\paragraph{Algorithmic Fairness.}
Fairness in algorithms received considerable attention in 
the recent past, see \cite{HPNS16,TRT11,ZVG17,ZVGGW17c} and references therein.
The closely related notion of disparate impact was first proposed by~\cite{FFMSV15}. It
has since been used by~\cite{ZVGG17} and Noriega-Campero et
al.~\cite{NBGP18} for classification and Celis et al.~\cite{CHV18,CSV18}
for voting and ranking problems. Another problem that received 
considerable attention is fair clustering. This was first proposed as a 
problem by~\cite{CKLV17} in the case of a binary 
protected attribute. It was then investigated for various 
objectives and more color classes in theirs and subsequent work 
~\cite{BIOSVW19,BCN19,BGKKRSS18,HuangJV19,RS18,SSS19}.

Most closely related to our work are the papers by~\cite{SamadiTMSV18,TantipongpipatS19,KleindessnerSAM19}. The former two papers considers the problem of executing a principal component
analysis in a fair manner. Specifically, given a matrix $A$ where the
rows are colored (e.g. every row corresponds to a man or a woman), they
ask for an algorithm that finds the finds a rank $k$ matrix $A'$ whose
residual error $\|A-A'\|$ is small for both types of rows
simultaneously.
While our method is similarly based on using the principal component in
a fair manner, the difference is that we may be forced to treat the
classes differently, if we aim to uncover a dense subgraph as
illustrated in the example mentioned above and in
Figure~\ref{fig:polbooksembedding}.

The latter paper by~\cite{KleindessnerSAM19} considers
spectral clustering problems such as normalized cut. Like our work, they
project the Laplacian matrix of a graph $G$ onto a suitable ``fair''
subspace, and then run $k$-means on the subspace spanned by  the smallest resulting eigenvectors. Under a fair version of the stochastic block model, they show
that this algorithm recovers planted fair partitions.
Our work continues this idea by applying the technique to the densest
subgraph problem.

\subsection{Preliminaries and Notation}\label{se:prelims}
We consider undirected graphs $G(V,E,w)$, where $V$ is the set of $n$ nodes,
$E\subset V\times V$ is the set of edges, and $w: E\rightarrow
\mathbb{R}_{\geq 0}$ is a weight function. We denote the (weighted) adjacency matrix of $G$ by $A$.
For a subset $E'\subset E$ of the edges, we let $w(E') = \sum_{e\in 
E'}w(e)$. Considered $u\in V$, its (weighted) degree is $\dg_u = 
\sum_{e\cap \{v\}\ne\emptyset}w(e)$.\footnote{The term {\em volume} is 
often used rather than weighted degree. Here we simply use the term 
"degree" liberally, since our algorithms and results equally apply to 
unweighted and weighted graphs.} We also let $\dg_{\max} = 
\max_u\dg_u$. Considered $S\subseteq V$, we denote by $G_S$ the induced 
subgraph. The {\em density} $D_S(G)$ of $S\subseteq V$ 
is simply the average degree of $G_S$, namely:
namely:
\[
	D_S(G) = \frac{2\cdot \left\vert E\cap S\times S\right\vert}{\left\vert 
	S\right\vert}.\footnote{The right-hand side becomes $\frac{2w(E\cap 
	S\times S)}{\left\vert S\right\vert}$ for weighted graphs.}
\]
We omit $G$ from $D_S(G)$, whenever clear from context.

A {\em coloring} of the vertices is simply a map $c:V\rightarrow [\ell]$ of $V$,
where $[\ell]:=\{1,2,\dots,\ell$\}. A set $S\subset V$ is called
\emph{fair} if $|S\cap \{v\in V~|~c(v)=1\}| = |S\cap \{v\in
V~|~c(v)=2\}|=\dots= |S\cap \{v\in V~|~c(v)=\ell\}|$. A graph is
called fair if $V$ is fair. 
In the remainder, we provide positive results for the important case 
$\ell = 2$. In this case, for simplicity of exposition we denote the 
colors {\em red} and {\em blue} and we use
$\textit{Red}:=\{v\in V~|~c(v)=\textit{red}\}$ and
$\textit{Blue}:=\{v\in V~|~c(v)=\textit{blue}\}$ to refer
to nodes of the respective color.

%$\|A\|_2 =\underset{x\neq 0}{\max} \frac{\|Ax\|_2}{\|x\|^2$. 
%Further, we define the singular value decomposition as $A=U\Sigma V^T$, where $U$ and $V$
%have orthogonal rows and $\Sigma$ is a diagonal matrix with nonnegative
%entries called singular values. By convention, the entries of
%$\Sigma_{1,1} :=\sigma_1 = \|A\| > \Sigma_{2,2} := \sigma_2,\ldots >
%\Sigma_{n,n}:=\sigma_n \geq 0 $. The Moore--Penrose pseudoinverse of $A$
%is defined as $A^{\dagger}:=V\Sigma^{-1} U^T$. 
%The matrix $AA^{\dagger}$ (resp. $A^{\dagger}A$) is defined as the 
%orthogonal projection onto the columnspace (resp. rowspace) 
%of~$A$.\footnote{$A^{\dagger}$ is the pseudo-inverse of $A$. Its further properties are not relevant for this paper.}

\begin{definition}[Fair Densest Subgraph Problem]\label{def:dsg}
Given a (weighted) graph $G(V,E,w)$ and a coloring $c$ of its vertices,
identify a fair subset $S\subseteq V$ that maximizes $D_S$.
\end{definition}

The fair densest subgraph problem is obviously a constrained 
version of the densest subgraph problem. It turns out to be considerably harder 
than its (polynomially solvable) unconstrained counterpart, as we show 
in Section \ref{sec:problems}.

\paragraph{Linear algebra notation.}
We denote by $\lambda_1\ge\lambda_2\ge\ldots\ge \lambda_n$ the 
eigenvalues of $A$ and by $v_i$ its $i$-th eigenvector. We also set 
$\lambda = \max_{i > 2}(\lambda_2, |\lambda_n|)$.
Note that we always have $\lambda_1\le\dg_{\max}$. 
For a subset $S\subset V$, we denote by $\chi$ its 
normalized indicator vector, where $S$ is understood from context. 
Namely, $\chi_i = 1/\sqrt{|S|}$ if $i\in S$, $\chi_i = 0$ otherwise. 
Finally, for a vector $x\in\mathbb{R}^n$, we let $\|x\| =\sqrt{\sum_{i=1}^n
x_i^2}$, the $2$-norm of $x$.

\section{Spectral Relaxations for the Fair Densest Subgraph}\label{se:transform}
As observed in Kannan and Vinay~\cite{KV99}, the densest subgraph 
problem admits a spectral formulation. In particular, denoted by $x$ 
and indicator vector over the vertex set, the indicator vector of the 
vertex subset maxizing density is the maximizer of the following 
expression:
\[
	\max_{x\in\{0, 1\}^n}\frac{x^TAx}{x^Tx}.
\]

Now, assume that each node is colored with one of two colors, red or blue. The optimal solution
$x^*$ might well overrepresent one of the colors.
To formulate the problem of computing a fair solution, we can add the
constraint
\begin{eqnarray*}
& &\sum_{\text{node~$i$ is red}}x_i = \sum_{\text{node~$i$ is blue}}x_i\\
&\Leftrightarrow & \sum_{\text{node~$i$ is red}}x_i\ - \sum_{\text{node~$i$ is blue}}x_i = 0.
\end{eqnarray*}
\[\]
If we define the (unit $2$-norm) vector 
\[
	f_i=
	\begin{cases}
		\frac{1}{\sqrt{n}} 
		&\text{if node~$i$ is red} \\
		-\frac{1}{\sqrt{n}} & \text{if node~$i$ is blue,}
	\end{cases}
\]
the above constraint can be described as
$f^Tx=0$. We call such an $x$ \emph{fair}.
Conversely, very unbiased solutions will have high inner products with $f$.

\paragraph{Fair Densest Subgraph: Spectral Relaxation}
Based on the considerations above, our approach transforms the input data (in this case
the adjacency matrix $A$) by first projecting them onto the kernel of $f$.
Namely, we first consider the following formulation of the fair densest 
subgraph problem:
\[
	\max_{x\in\{0, 1\}^n}\frac{2x^T(I-ff^T)A(I-ff^T)x}{x^Tx}.
\]
It should be noted that, for any fair subset $S$ with indicator $x$, 
we have $\frac{2x^TAx}{x^Tx} = \frac{2x^T(I-ff^T)A(I-ff^T)x}{x^Tx}.$ Conversely, for 
any indicator vector $x\notin span(I-ff^T)$, the objective 
value can only decrease.

We next note that by relaxing $x$ to be an arbitrary vector, the above 
expression is maximized by the main eigenvector of $(I-ff^T)A(I-ff^T)$. Indeed,~\cite{KV99} established a relationship between the first 
eigenvector of the adjacency matrix $A$ and an approximately densest subgraph. 
Similar ideas are also implicit in the work of~\cite{McS01}. 
The above relaxation corresponds to replacing hard fairness constraints 
with soft ones. To prove our main result we need the following 
definition:

\begin{definition}\label{def:ar}
Graph $H = (V_H, E_H)$ is $(\dg, \epsilon)$-regular if a $\dg$ exists, 
such that $(1 - \epsilon)\dg\le\dg_i\le(1 + \epsilon)\dg$, for every 
$i\in V_H$.
\end{definition}

\begin{theorem}\label{thm:fair_spectral}
Assume we have a graph $G = (V, E, w)$ with a 2-coloring of the nodes.
Assume the spectrum of $A$ satisfies $\lambda_1\ge 
4\lambda$.\footnote{That is, $G$ is an expander.} Assume further that $G$ 
contains a fair subset $S$ such that: (1) $G_S$ is $(\dg, 
\epsilon)$-regular and (2) $d\ge(1 - \theta)\dg_{\max}$. In this case, it is 
possible to recover all but $16(\epsilon + 
\theta)|S|$ of the vertices in $S$ in polynomial time.
\end{theorem}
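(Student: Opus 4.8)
The plan is to show that the (polynomial-time computable) top eigenvector $\hv$ of the projected matrix $M := (I-ff^T)A(I-ff^T)$ is $\ell_2$-close to the normalized indicator $\chi_S$ of the planted fair set $S$, and then to recover $S$ by thresholding $\hv$ entrywise.

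First I would lower bound the Rayleigh quotient of $\chi_S$. Because $S$ is fair we have $f^T\chi_S = 0$, so $(I-ff^T)\chi_S = \chi_S$ and therefore $\chi_S^T M \chi_S = \chi_S^T A \chi_S = D_S$. By the $(\dg,\epsilon)$-regularity of $G_S$ every internal degree is at least $(1-\epsilon)\dg$, hence $D_S \ge (1-\epsilon)\dg$. Since $\chi_S$ is a unit vector, this immediately yields $\mu_1 := \lambda_1(M) \ge (1-\epsilon)\dg$ for the top eigenvalue of $M$.

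Next I would control the rest of the spectrum of $M$. Observe $Mf = 0$, so the nonzero spectrum of $M$ coincides with the spectrum of the compression of $A$ to the codimension-one subspace $f^{\perp}$. By Cauchy interlacing, the second eigenvalue of this compression is at most $\lambda_2 \le \lambda$, while its top eigenvalue is at most $\lambda_1 \le \dg_{\max}$. Combining the degree hypothesis $\dg \ge (1-\theta)\dg_{\max}$ with the expander hypothesis $\lambda \le \lambda_1/4 \le \dg_{\max}/4$ produces a genuine separation between $\mu_1 \ge (1-\epsilon)\dg$ and $\mu_2 \le \lambda$. Expanding $\chi_S$ in the eigenbasis of $M$ and using $\chi_S^T M\chi_S \ge (1-\epsilon)\dg$ together with the cap $\mu_1 \le \dg_{\max} \le \dg/(1-\theta)$ then forces $1 - \langle \chi_S, \hv\rangle^2 \le 2(\epsilon+\theta)$, the numerator $\mu_1-(1-\epsilon)\dg$ being controlled by the degree hypothesis and the denominator $\mu_1-\lambda$ kept bounded away from zero by the expander hypothesis. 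Fixing the sign of $\hv$ so that $\langle\chi_S,\hv\rangle\ge 0$ gives $\|\hv-\chi_S\|^2 = 2(1-\langle\chi_S,\hv\rangle) \le 2(1-\langle\chi_S,\hv\rangle^2) \le 4(\epsilon+\theta)$.

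Finally I would round. Since $\chi_S$ equals $1/\sqrt{|S|}$ on $S$ and $0$ off $S$, any vertex on which $\hv$ differs from $\chi_S$ by more than $\tfrac{1}{2\sqrt{|S|}}$ contributes at least $\tfrac{1}{4|S|}$ to $\|\hv-\chi_S\|^2$. Hence thresholding $\hv$ at $\tfrac{1}{2\sqrt{|S|}}$ misclassifies at most $4|S|\cdot\|\hv-\chi_S\|^2 \le 16(\epsilon+\theta)|S|$ vertices, which in particular bounds the number of vertices of $S$ that we fail to recover, as claimed. I expect the main obstacle to be the quantitative heart of the second step: converting the interlacing bound $\mu_2\le\lambda$ and the degree/expander hypotheses into a correlation bound of exactly the advertised order $2(\epsilon+\theta)$, rather than merely $O(\epsilon+\theta)$ with a large hidden constant, since the final $16(\epsilon+\theta)|S|$ leaves little slack and relies on using $\mu_1 \le \dg/(1-\theta)$ and $\lambda \le \dg_{\max}/4$ with care.
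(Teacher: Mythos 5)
Your proposal is correct and follows the same overall architecture as the paper's proof: lower-bound the Rayleigh quotient of $\chi_S$ on the projected matrix $(I-ff^T)A(I-ff^T)$ using fairness of $S$ and $(\dg,\epsilon)$-regularity, establish a spectral gap for the projected matrix, deduce that $\chi_S$ is close to its top eigenvector, and round by thresholding. The one step where you take a genuinely different route is the spectral gap: the paper proves $\hlambda_2\le\frac{3}{4}\lambda_1$ by writing $\hv_2=\gamma v_1+z$ and running a two-case analysis on $\gamma$ (its Lemma~\ref{le:gap}), whereas you invoke Cauchy interlacing for the compression of $A$ to $f^{\perp}$ to get the stronger and cleaner bound $\hlambda_2\le\lambda_2\le\lambda\le\lambda_1/4$. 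This extra strength is not merely cosmetic: it lets you first prove $1-\langle\chi_S,\hv_1\rangle^2\le 2(\epsilon+\theta)$ and then pass to $\|\hv_1-\chi_S\|^2=2(1-\langle\chi_S,\hv_1\rangle)\le 2\bigl(1-\langle\chi_S,\hv_1\rangle^2\bigr)\le 4(\epsilon+\theta)$ with the factor of $2$ correctly accounted for (the paper's Lemma~\ref{le:projection} silently identifies $\|\chi-\hv_1\|^2$ with $1-\alpha^2$, which loses exactly this factor), so your route actually lands on the advertised constant more rigorously. Two small points to tie up: you should note, as you implicitly do, that the bound is vacuous unless $\epsilon+\theta<1/16$, which is the regime in which your constant-chasing (e.g.\ $\theta/(1-\theta)\le\frac{16}{15}\theta$ and the denominator staying above roughly $2/3$) goes through; and your thresholding step uses the unknown quantity $|S|$, which the paper resolves by sweeping over all prefix sizes of the sorted eigenvector (Algorithm~\ref{alg:ss}) and keeping the densest nearly balanced prefix --- you would need the same sweep to make the rounding algorithmic.
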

Intuitively, the result above states that, if the underlying network 
$G$ is an expander containing an almost-regular, dense and fair subgraph, we 
can approximately retrieve it in polynomial time. Succintly, this 
follows because, under these assumptions, the 
indicator vector of $S$ forms a small angle with the main eigenvector of 
$(I - ff^T)A(I - ff^T)$. 

\begin{proof}[Proof of Theorem \ref{thm:fair_spectral}]
In the remainder of this proof, we denote by $\hlambda_1\ge,\ldots 
,\ge\hlambda_n$ the eigenvalues of $(I - ff^T)A(I - ff^T)$ and by $\hv_i$ the 
$i$-th associated eigenvector. For a vertex $i$ of $G_S$ we denote by 
$\hat{\dg}_i$ its degree in $G_S$. We denote by $\chi$ the indicator 
vector of $S$ and we let $m = |S|$.

As a first step, we summarize straightforward, yet useful properties of the 
spectrum of $(I - ff^T)A(I - ff^T)$.
\begin{lemma}\label{le:hvs}
Whenever $\hlambda_i\ne 0$ we have:
\begin{equation}
	(I - ff^T)\hv_i = \hv_i\, \text{and }\,\hlambda_i = \hv_i^TA\hv_i
\end{equation}
\end{lemma}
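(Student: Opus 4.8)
The plan is to exploit the fact that $P := I - ff^T$ is an orthogonal projection, which reduces the entire lemma to two elementary projection identities. First I would record that $f$ is a unit vector: since every node receives exactly one of the two colors and $f_i^2 = 1/n$ for each of the $n$ nodes, we have $\|f\|^2 = \sum_{i=1}^n f_i^2 = n\cdot\tfrac1n = 1$, so $f^Tf = 1$. From this, $P$ is symmetric ($P^T = I - ff^T = P$) and idempotent, since
\[
	P^2 = (I - ff^T)(I - ff^T) = I - 2ff^T + f(f^Tf)f^T = I - 2ff^T + ff^T = P.
\]
Hence $P$ is the orthogonal projection onto the hyperplane $f^\perp$, and the transformed matrix is $M := (I-ff^T)A(I-ff^T) = PAP$.

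For the first identity, I would start from the eigenvalue equation $PAP\,\hv_i = \hlambda_i\hv_i$ (taking $\hv_i$ to be a unit eigenvector) and apply $P$ to both sides. On the left, $P^2 = P$ gives $P(PAP\,\hv_i) = PAP\,\hv_i = \hlambda_i\hv_i$; on the right we get $\hlambda_i P\hv_i$. Equating the two yields $\hlambda_i(\hv_i - P\hv_i) = 0$, and since $\hlambda_i \ne 0$ we may cancel to conclude $(I-ff^T)\hv_i = P\hv_i = \hv_i$. This is precisely where the hypothesis $\hlambda_i \ne 0$ enters: eigenvectors lying in the kernel of $M$ need not belong to $f^\perp$, so the restriction is essential and I would flag it rather than gloss over it.

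For the second identity, I would left-multiply the eigenvalue equation by $\hv_i^T$ to obtain $\hv_i^T M\hv_i = \hlambda_i\|\hv_i\|^2 = \hlambda_i$, using the unit-norm convention. Then, rewriting $M = PAP$ and invoking the symmetry of $P$ together with the just-proved identity $P\hv_i = \hv_i$, I get
\[
	\hlambda_i = \hv_i^T PAP\hv_i = (P\hv_i)^T A (P\hv_i) = \hv_i^T A \hv_i,
\]
as claimed.

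I do not expect a genuine obstacle: the whole statement collapses once one recognizes $P$ as an orthogonal projection, and everything follows from $P = P^T$ and $P^2 = P$. The only points deserving care are the normalization convention $\|\hv_i\| = 1$ (without it the second identity would carry a Rayleigh-quotient denominator $\hv_i^T\hv_i$) and the explicit use of $\hlambda_i \ne 0$ in the first part, both of which I would state plainly.
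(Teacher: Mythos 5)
Your proposal is correct and follows essentially the same route as the paper: apply the projection $I-ff^T$ to the eigenvalue equation, use idempotence to cancel the nonzero eigenvalue and deduce $(I-ff^T)\hv_i=\hv_i$, then left-multiply by $\hv_i^T$ for the second identity. You are somewhat more explicit than the paper in verifying that $I-ff^T$ is a symmetric idempotent (via $\|f\|=1$) and in flagging the unit-norm convention, but these are just details the paper leaves implicit.
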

\begin{proof}
If $\hlambda_i\ne 0$, we have:
\[
	(I - ff^T)A(I - ff^T)\hv_i = \hlambda_i\hv_i.
\]
Since $(I - ff^T)$ is a projection matrix, if we pre-multiply both 
members of the above equation by $(I - ff^T)$ we have:
\[
	(I - ff^T)A(I - ff^T)\hv_i = \hlambda_i(I - ff^T)\hv_i.
\]
Subtracting the first equation from the second and recalling that 
$\hlambda_i\ne 0$ immediately the first claim.

The second claim follows immediately from the first:
\[
	\hlambda_i = \hv_i^T(I - ff^T)A(I - ff^T)\hv_i = \hv_i^TA\hv_i.
\]
\end{proof}
It should be noted that, as a consequence of Lemma \ref{le:hvs}, we 
always have:
\[
	\hlambda_1 = \hv_1^T(I - ff^T)A(I - ff^T)\hv_1 = 
	\hv_1^TA\hv_1\le v_1^TAv_1 = \lambda_1.
\]
Note that this last property does not apply to the other eigenvalues in general.
The first important, technical step to prove Theorem \ref{thm:fair_spectral} is showing that the 
hypothesis $\lambda_1\ge 4\lambda$ implies that $\hat{\lambda}_2$ 
cannot be ``too large''.
\begin{lemma}\label{le:gap}
Assume the spectrum of $A$ satisfies the condition $\lambda_1\ge 
4\lambda_2$. Then $\hlambda_2\le\frac{3}{4}\lambda_1$.
\end{lemma}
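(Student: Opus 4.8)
The plan is to reduce the claim to the Courant--Fischer min-max characterization of $\lambda_2$, using Lemma~\ref{le:hvs} to supply exactly the structure I need. First I would dispose of a trivial case: if $\hlambda_2\le 0$, then since $\lambda_1\ge \mathbf{1}^TA\mathbf{1}/n\ge 0$ we get $\hlambda_2\le 0\le\tfrac34\lambda_1$ for free. So I would assume $\hlambda_2>0$, which forces $\hlambda_1\ge\hlambda_2>0$ as well, so that Lemma~\ref{le:hvs} applies to \emph{both} $\hv_1$ and $\hv_2$: each satisfies $(I-ff^T)\hv_i=\hv_i$ (so both lie in $f^\perp$) and $\hlambda_i=\hv_i^TA\hv_i$.

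The key structural observation I would establish next is that $\hv_1$ and $\hv_2$ are not merely orthonormal but also $A$-conjugate, i.e. $\hv_1^TA\hv_2=0$. This follows because $P:=I-ff^T$ fixes both vectors, so $\hv_1^TA\hv_2=(P\hv_1)^TA(P\hv_2)=\hv_1^T\bigl(PAP\bigr)\hv_2=\hv_1^T\bigl(\hlambda_2\hv_2\bigr)=\hlambda_2\,(\hv_1^T\hv_2)=0$, using that $\hv_2$ is an eigenvector of $PAP$ with eigenvalue $\hlambda_2$ and that $\hv_1\perp\hv_2$.

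With conjugacy in hand, I would test the max-min formula on the two-dimensional space $U=\mathrm{span}(\hv_1,\hv_2)$. For any unit $x=\alpha\hv_1+\beta\hv_2\in U$ (so $\alpha^2+\beta^2=1$) the cross term drops out and $x^TAx=\alpha^2\hlambda_1+\beta^2\hlambda_2$ is a convex combination of $\hlambda_1$ and $\hlambda_2$; hence $\min_{0\ne x\in U}x^TAx/x^Tx=\hlambda_2$. Plugging $U$ into $\lambda_2=\max_{\dim W=2}\min_{0\ne x\in W}x^TAx/x^Tx$ gives $\lambda_2\ge\hlambda_2$. Combined with the hypothesis $\lambda_2\le\tfrac14\lambda_1$, this yields $\hlambda_2\le\lambda_2\le\tfrac14\lambda_1\le\tfrac34\lambda_1$, which is in fact stronger than the stated bound.

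I do not expect a real obstacle: the argument is essentially Cauchy interlacing specialized to the codimension-one subspace $f^\perp$, on which $PAP$ acts exactly as the compression of $A$, so its second eigenvalue cannot exceed $\lambda_2$. The only points needing care are the sign case-split that guarantees $\hlambda_2\ne 0$ (so Lemma~\ref{le:hvs} is available for $\hv_2$), and the verification of $\hv_1^TA\hv_2=0$, which is precisely what makes the Rayleigh quotient on $U$ collapse to a clean convex combination; everything else is routine. If one prefers to avoid Lemma~\ref{le:hvs}, the identical conclusion follows by noting that the nonzero eigenvalues of $PAP$ are those of $A$ restricted to $f^\perp$ and invoking Cauchy interlacing directly to get $\hlambda_2\le\lambda_2$.
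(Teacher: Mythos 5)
Your proof is correct, but it takes a genuinely different route from the paper's. The paper argues directly about the single vector $\hv_2$: it decomposes $\hv_2=\gamma v_1+z$ with $z\perp v_1$, writes $\hlambda_2=\hv_2^TA\hv_2$ via Lemma~\ref{le:hvs}, and then does a case analysis on the overlap $\gamma$ (a large overlap with $v_1$ would force $\hlambda_2$ to be so large as to contradict the spectral-gap hypothesis, while a small overlap bounds $\hlambda_2$ by $\gamma\lambda_1+\lambda_2\le\frac{3}{4}\lambda_1$). You instead observe that on the invariant subspace $f^\perp$ the matrix $(I-ff^T)A(I-ff^T)$ is exactly the compression of $A$, establish the $A$-conjugacy $\hv_1^TA\hv_2=0$, and invoke Courant--Fischer on $\mathrm{span}(\hv_1,\hv_2)$ --- i.e.\ Cauchy interlacing --- to get $\hlambda_2\le\lambda_2$ outright (after the sign case-split that licenses applying Lemma~\ref{le:hvs} to $\hv_2$). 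Your approach buys several things: it yields the stronger bound $\hlambda_2\le\frac{1}{4}\lambda_1$, it needs only the hypothesis on $\lambda_2$ rather than on $\lambda=\max(\lambda_2,|\lambda_n|)$, and it avoids the paper's case analysis on $\gamma$ (which, as written in the paper, also contains some slips, e.g.\ $\gamma\lambda_1$ where $\gamma^2\lambda_1$ is meant, and the bound $\|z\|^2\le 1/2$ claimed from $\gamma\ge 1/2$). The paper's decomposition argument, on the other hand, is self-contained at the level of a single Rayleigh-quotient computation and generalizes more readily to settings where one only controls $\max_{w\perp v_1}|w^TAw|/\|w\|^2$ rather than the full interlacing structure. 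The two small points you flag as needing care --- the $\hlambda_2\le 0$ case (which requires $\lambda_1\ge 0$, true for a nonnegative adjacency matrix) and the conjugacy computation --- are both handled correctly in your sketch.
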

\begin{proof}
We first express $\hv_2$ as $\hv_2 = \gamma v_1 + z$, where $z$ is 
$\hv_2$'s component orthogonal to $v_1$, the main eigenvector of $A$. 
Note that, since $v_1$ has unit norm, we have $\gamma^2 + \|z\|^2 = 1$. 
Next:
\begin{equation}\label{eq:hlambda_2}
	\hlambda_2 = (\gamma v_1 + z)^TA(\gamma v_1 + z)^T = \gamma\lambda_1 + 
	z^TAz,
\end{equation}
where the first equality follows from Lemma \ref{le:hvs}, while the 
second follows since $z\in \mathbf{span}(v_2,\ldots , v_n)$ by definition and 
the $v_i$'s form an orthonormal basis.
Next, assume $\gamma\ge 1/2$. In this case, we have:
\begin{align}\label{eq:contr}
	&\hlambda_2 = \gamma\lambda_1 + z^TAz = \gamma\lambda_1 + 
	\|z\|^2\frac{z^TAz}{\|z\|^2}\ge \frac{3}{8}\lambda_1.
\end{align}
Here, the third inequality follows since i) $\|z\|^2 = 1 - \gamma^2\le 
1/2$, while $z\in \mathbf{span}(v_2,\ldots , v_n)$ implies:
\[
\left|\frac{z^TAz}{\|z\|^2}\right|\le\max_{w\perp 
v_1}\left|\frac{w^TAw}{\|w\|^2}\right| = \lambda.
\]
But \eqref{eq:contr} contradicts our assumption that $\lambda_1\ge 
4\lambda$. On the other hand, if $\gamma\le 1/2$, \eqref{eq:hlambda_2} implies:
\begin{align*}
	&\hlambda_2 = \gamma\lambda_1 + z^TAz\le\gamma\lambda_1 + 
	z^TAz\\
	&\le\gamma\lambda_1 + \|z\|^2\max_{w\perp 
	v_1}\frac{w^TAw}{\|w\|^2}\\
	&= \gamma\lambda_1 + (1 - \gamma^2)\lambda_2\le\frac{\lambda_1}{2} + 
	\lambda_2\le\frac{3}{4}\lambda_1. 
\end{align*}
\end{proof}
The second step is showing that Lemma \ref{le:gap} implies that the 
indicator vector of the fair densest subgraph is close to $\hv_1$:
\begin{lemma}\label{le:projection}
Assume the the hypotheses of Theorem \ref{thm:fair_spectral} hold. Then:
\[
	\|\chi - \hv_1\|^2\le 4(\epsilon + \theta).
\] 
\end{lemma}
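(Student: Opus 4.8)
The plan is to expand the unit indicator vector $\chi$ in an orthonormal eigenbasis of $M := (I-ff^T)A(I-ff^T)$ and argue that almost all of its mass concentrates on $\hv_1$. The one structural fact that drives everything is that $S$ is fair, so $f^T\chi = 0$ and hence $(I-ff^T)\chi = \chi$. Consequently the projected quadratic form collapses to the ordinary one,
\[
\chi^T M \chi \;=\; \chi^T A \chi \;=\; D_S,
\]
the last equality being the Kannan--Vinay identity $\chi^T A\chi = \frac{2\,|E\cap S\times S|}{m} = \frac{1}{m}\sum_{i\in S}\hdg_i$. I would then bound the common value $\chi^T M\chi = D_S$ from both sides.

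For the lower bound I would use the two structural hypotheses on $S$. Since $G_S$ is $(\dg,\epsilon)$-regular, every internal degree satisfies $\hdg_i\ge(1-\epsilon)\dg$, so $D_S\ge(1-\epsilon)\dg$; combining $\dg\ge(1-\theta)\dg_{\max}$ with the standing bound $\lambda_1\le\dg_{\max}$ gives
\[
D_S \;\ge\; (1-\epsilon)(1-\theta)\lambda_1 \;\ge\; (1-\epsilon-\theta)\lambda_1.
\]
For the upper bound I would write $\chi=\sum_i\alpha_i\hv_i$ with $\sum_i\alpha_i^2=\|\chi\|^2=1$ and expand $\chi^T M\chi = \sum_i\alpha_i^2\hlambda_i$. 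Splitting off the top term and using $\hlambda_i\le\hlambda_2$ for every $i\ge 2$ (negative eigenvalues only strengthen the inequality) yields $\chi^T M\chi \le \alpha_1^2\hlambda_1 + (1-\alpha_1^2)\hlambda_2$. The two facts already established about the projected spectrum then close the estimate: the remark following Lemma~\ref{le:hvs} gives $\hlambda_1\le\lambda_1$, and Lemma~\ref{le:gap} gives $\hlambda_2\le\tfrac34\lambda_1$, so
\[
D_S \;\le\; \Big(\alpha_1^2 + \tfrac34(1-\alpha_1^2)\Big)\lambda_1.
\]

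Chaining the two displays and dividing by $\lambda_1>0$ gives $1-\epsilon-\theta \le \tfrac34 + \tfrac14\alpha_1^2$, i.e. $1-\alpha_1^2 \le 4(\epsilon+\theta)$. Since $1-\alpha_1^2$ is exactly the squared length of the component of $\chi$ orthogonal to $\hv_1$, this is the quantity we want, and (fixing the sign of $\hv_1$ so that $\alpha_1\ge 0$) it transfers to $\|\chi-\hv_1\|^2$. I expect the genuine work to sit in two places. The first is the lower bound on $D_S$: one must keep straight that $\dg$ refers to the internal degrees of $G_S$ while $\dg_{\max}$ and $\lambda_1$ refer to the whole graph, and that the regularity and near-maximality hypotheses are precisely what bridge them. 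The second, more delicate point is the final passage from $1-\alpha_1^2$ to $\|\chi-\hv_1\|^2$: since $\|\chi-\hv_1\|^2 = 2(1-\alpha_1)$ whereas the argument naturally controls $1-\alpha_1^2=(1-\alpha_1)(1+\alpha_1)$, matching the stated constant requires either reading the target as the orthogonal-residual distance $1-\alpha_1^2$ or absorbing a bounded constant factor, and this is where I would double-check the bookkeeping.
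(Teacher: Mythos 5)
Your proposal follows essentially the same route as the paper's own proof: use fairness of $S$ to get $(I-ff^T)\chi=\chi$, lower-bound $\chi^T(I-ff^T)A(I-ff^T)\chi$ by $(1-\epsilon)(1-\theta)\lambda_1$ via regularity and $\lambda_1\le\dg_{\max}$, upper-bound it by $\alpha_1^2\hlambda_1+(1-\alpha_1^2)\hlambda_2$ via the eigenbasis decomposition, and close with $\hlambda_1\le\lambda_1$ and Lemma~\ref{le:gap} to get $1-\alpha_1^2\le 4(\epsilon+\theta)$. The bookkeeping issue you flag at the end is real --- the argument controls $\|z\|^2=1-\alpha_1^2$ while $\|\chi-\hv_1\|^2=2(1-\alpha_1)$, so the stated bound holds for the orthogonal residual and only up to a factor $2$ for the literal distance --- and the paper silently elides exactly this step (writing $\|\chi-\hv_1\|^2\le 1-\alpha^2$ without justification), so your version is if anything the more careful one; the discrepancy only affects the constant in Corollary~\ref{cor:no_outliers}.
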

\begin{proof}
We begin by noting that $\chi^Tf = 0$ by definition, 
which implies $(I - ff^T)\chi = \chi$. 
We therefore have:
%\begin{equation}
	%\chi^T(I - ff^T)A(I - ff^T)\chi = \chi^T A\chi = \frac{\sum_{i\in 
	%S}\hdg_i}{m}\ge (1 - \epsilon)d,\label{eq:1}
%\end{equation}
\begin{equation}
	\chi^T(I - ff^T)A(I - ff^T)\chi = \frac{\sum_{i\in 
	S}\hdg_i}{m}\ge (1 - \epsilon)d,\label{eq:1}
\end{equation}
Next, we decompose $\chi$ along its components respectively parallel 
and orthogonal to $\hv_1$, namely, $\chi = \alpha\hv_1 + z$, and we 
note that $\|z\|^2 = 1 - \alpha^2$, since both $\hv_1$ and $\chi$ are 
unit norm vectors. Set $B = (I - ff^T)A(I - ff^T)$ for the sake of 
space. We have:
\begin{align}
\nonumber
	&\chi^TB\chi = (\alpha\hv_1 + z)^TBs(\alpha\hv_1 + z)
	= \alpha^2\hlambda_1 + z^TBz \\
	&\le\alpha^2\hlambda_1 + \hlambda_2\|z\|^2\le\alpha^2\hlambda_1 
	+ (1 - \alpha^2)\hlambda_2.\label{eq:2}
\end{align}
Putting together \eqref{eq:1} and \eqref{eq:2} yields
$	\alpha^2\ge\frac{(1 - \epsilon)\dg - \hlambda_2}{\hlambda_1 - 
	\hlambda_2}$. Now:
\begin{align*}
	& \|\chi - v\|^2\le 1 - \frac{(1 - \epsilon)\dg - \hlambda_2}{\hlambda_1 
	- \hlambda_2}\\
	&\le 1 - \frac{(1 - \epsilon)(1 - \theta)\dg_{\max} - \hlambda_2}{\lambda_1 
	- \hlambda_2}\\
	&\le 1 - \frac{(1 - \epsilon)(1 - \theta)\lambda_1 - \hlambda_2}{\lambda_1 
	- \hlambda_2} \\
	&< 1 - \frac{\lambda_1 - \hlambda_2 - (\epsilon + \theta)\lambda_1}{\lambda_1 
	- \hlambda_2} \\
	&= \frac{(\epsilon + \theta)\lambda_1}{\lambda_1 - 
	\hlambda_2}\le 4(\epsilon + \theta).
\end{align*}
Here, the second inequality follows from our hypotheses on $\dg$ and 
since $\hlambda_1\le\hlambda$, the 
third inequality follows since the main eigenvalue of an adjacency 
matrix is upper-bounded by the maximum degree of the underlying graph, 
while the last inequality follows from Lemma \ref{le:gap}.
\end{proof}

\begin{corollary}\label{cor:no_outliers}
Under the hypotheses of Lemma \ref{le:projection}, for all but at most 
$16m(\epsilon + \theta)$ vertices in $V$ we have: i) $\hv_1(i) \ge 
\frac{1}{2\sqrt{m}}$ if $i\in S$, ii)  $\hv_1(i) < \frac{1}{2\sqrt{m}}$ 
otherwise.
\end{corollary}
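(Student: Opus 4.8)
The plan is to prove the corollary by a direct counting argument driven by the $\ell_2$ bound from Lemma~\ref{le:projection}. The key idea is that every vertex that violates the claimed threshold condition must contribute a fixed, quantifiable amount to the squared distance $\|\chi - \hv_1\|^2$, so that the total distance budget $4(\epsilon+\theta)$ caps the number of such vertices.

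First I would call a vertex $i$ \emph{bad} if it fails the conclusion: either $i \in S$ but $\hv_1(i) < \frac{1}{2\sqrt{m}}$, or $i \notin S$ but $\hv_1(i) \ge \frac{1}{2\sqrt{m}}$. Recalling that $\chi_i = \frac{1}{\sqrt{m}}$ for $i \in S$ and $\chi_i = 0$ otherwise, I would observe that in the first case $\chi_i - \hv_1(i) > \frac{1}{\sqrt{m}} - \frac{1}{2\sqrt{m}} = \frac{1}{2\sqrt{m}}$, while in the second case $|\chi_i - \hv_1(i)| = \hv_1(i) \ge \frac{1}{2\sqrt{m}}$. Hence in either case a bad vertex satisfies $(\chi_i - \hv_1(i))^2 \ge \frac{1}{4m}$.

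Next, letting $N$ denote the number of bad vertices and summing their individual contributions, I would bound
\[
\frac{N}{4m} \le \sum_{i \text{ bad}} (\chi_i - \hv_1(i))^2 \le \|\chi - \hv_1\|^2 \le 4(\epsilon + \theta),
\]
where the final inequality is exactly Lemma~\ref{le:projection}. Rearranging yields $N \le 16m(\epsilon + \theta)$, which is precisely the claimed bound.

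I do not expect a genuine obstacle here: the statement follows immediately once Lemma~\ref{le:projection} is in hand, because the threshold $\frac{1}{2\sqrt{m}}$ is chosen to sit exactly halfway between the two values taken by $\chi$, guaranteeing the uniform per-vertex contribution of $\frac{1}{4m}$. The only point requiring mild care is that both types of misclassification — vertices of $S$ whose coordinate is too small and vertices outside $S$ whose coordinate is too large — incur the same contribution, so a single counting inequality simultaneously controls all bad vertices.
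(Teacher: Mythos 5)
Your proof is correct and follows essentially the same route as the paper: both arguments observe that any vertex violating the threshold condition contributes at least $\frac{1}{4m}$ to $\|\chi - \hv_1\|^2$, and then invoke the bound $4(\epsilon+\theta)$ from Lemma~\ref{le:projection} to cap the count at $16m(\epsilon+\theta)$. Your version is in fact slightly more explicit than the paper's, since you verify directly that each type of misclassified vertex incurs the required per-vertex contribution, a step the paper leaves implicit.
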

\begin{proof}
We first note that the number of vertices $i$ for which 
$	\left |\hv_1(i) - \chi(i)\right | > \frac{1}{2\sqrt{m}}$
is at most $16m(\epsilon + \theta)$. To see why, assume there are $x$ such vertices. This implies
$	\|\hv_1 - \chi\|^2\ge\frac{x}{4m}.$
On the other hand, the upper bound following from Lemma 
\ref{le:projection} immediately implies $x\le 16m(\epsilon + \theta)$.
Next, recall that $\chi_i = \frac{1}{\sqrt{m}}$ if $i\in S$, $\chi_i = 
0$ otherwise. This immediately implies the thesis.
\end{proof}
\paragraph{The algorithm}
Our algorithm is based on a sweep of $\hv_1$ \cite{KV99,McS01}. In 
particular, we run Algorithm GSA (see Algorithm \ref{alg:ss}) with $M = (I - ff^T)A(I - 
ff^T)$ and $\Delta = 16(\epsilon + \theta)$.
\begin{algorithm}[h]
\TitleOfAlgo{General Sweep Algorithm (GSA)}
\SetAlgoLined
\KwData{Non-negative $n\times n$ matrix $M$, parameter $\Delta$}
\KwResult{Subset $S\subseteq V$}
	$\hat{S} = \emptyset$; $\hat{D} = 0$\;
	
	Compute $v_1 = $ main eigenvector of $M$\;
	
	Sort nodes $i\in V$ in non increasing order of 
	$v_1(i)$\;
	
	\tcp*[h]{Assume w.l.o.g. that $\{1, \ldots, n\}$ is resulting ordering of 
	nodes in $V$}\;
	
	\For {$s = 1$ to $n$} {
		$S = \{1,\ldots , s\}$\\
		Compute $D_S = $ density of the subgraph induced by $S$\\
		\If {$D_S > \hat{D}$ AND $\left ||S\cap Red| - |S\cap Blue|\right | 
		\le \Delta |S|$} {
			$\hat{S} = S$; $\hat{D} = D_S$\\
		}
	}
	return $\hat{S}$
\caption{General Sweep Algorithm (non-increasing).}
\label{alg:ss}
\end{algorithm}

\iffalse
\begin{enumerate}
	\item Compute $\hv_1$, the main eigenvector of $(I - ff^T)A(I - 
	ff^T)$. Sort vertices, so that $i$ is now the vertex corresponding 
	to the $i$-th largest entry in $\hv_1$.
	\item For $s = 1,\ldots , n$: consider the solution obtained by 
	keeping the first $s$ vertices in the sorted order. Let $n^+$ and 
	$n^-$ respectively denote 
	\item Keep the solution corresponding to the value of $s$ such 
	that: i) the unbalance between red and blue nodes is at most 
	$16s(\epsilon + \theta)$; ii) the graph is densest.
\end{enumerate}
\fi
Corollary \ref{cor:no_outliers} ensures that i) the above algorithm always 
returns a solution, ii) the solution returned by the algorithm will not 
be worse than the one obtained by picking $i$ if $\hv_1(i)\ge 
\frac{1}{2\sqrt{m}}$ and rejecting it otherwise. 
This concludes the proof of Theorem \ref{thm:fair_spectral}.
\end{proof}

\section{Hard Constraints and Hardness of Approximation}
\label{sec:problems}
In general, enforcing fairness can make an ``easy'' problem intractable
and this is the case for the densest subgraph problem.
In this context, spectral relaxations can be regarded as a way to mitigate this issue, by 
enforcing soft fairness constraints to virtually any problem that is
amenable to an algebraic formulation. 

Nevertheless, in some cases it might be important to assess 
the \emph{price of fairness}, by comparing the achievable quality of fair solutions 
to that of solutions for the original, unconstrained problem.
In this section, we complement our algorithmic treatment of fairness with 
hardness results and approximation algorithms for specific cases. 
Some of our hardness results are based on the \emph{small set expansion
hypothesis}, which we now describe.

Consider a $d$-regular weighted graph $G$ and, for every $S\subset V$, 
denote by $\Phi(S)$ the \emph{expansion}\footnote{Or conductance.} of $S$ \cite{RaS10}. 
Given two constants $\delta,\eta\in
(0,1)$, the small set expansion problem~\cite{RaS10}
$SSE(\delta,\eta)$ asks to distinguish between the following two cases:
\begin{description}
\item[Completeness] There exists a set of nodes $S\subset V$ of size
$\delta\cdot |V|$ such that $\Phi(S)\leq \eta$.
\item[Soundness] For every set of nodes  $S\subset V$ of size
$\delta\cdot |V|$, $\Phi(S)\geq 1-\eta$.
\end{description} 
Our hardness proofs are based on the small set
expansion hypothesis defined as follows.
\begin{conjecture}[SSEH]
For every $\eta>0$ there exists a $\delta:=\delta(\eta)>0$ such that
$SSE(\eta,\delta)$ is NP-hard.
\end{conjecture}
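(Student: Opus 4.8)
\noindent
The plan must begin by acknowledging what this statement is: the small set expansion hypothesis is a \emph{conjecture}, assumed throughout this section and not established anywhere in the paper. Proving it unconditionally would settle one of the central open problems in the hardness-of-approximation literature, so any ``proof proposal'' is really a sketch of a research program rather than an argument I expect to complete. The only conceivable route is a gap-preserving reduction from a canonical NP-hard problem to the promise problem $SSE(\eta,\delta)$: from a hard instance one would construct a $d$-regular graph $G$ such that a planted structure corresponding to a ``yes'' instance yields a set $S$ of size $\delta|V|$ with $\Phi(S)\le\eta$ (completeness), while a ``no'' instance forces $\Phi(S)\ge 1-\eta$ for \emph{every} set of that size (soundness).

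Concretely, I would route the reduction through the tight connection, due to Raghavendra and Steurer, between small-set expansion and Unique Games. Starting from a hard Unique-Games- or Label-Cover-type instance, one applies a noise-graph encoding built on a long code (or the more efficient short code), so that the vertices of $G$ are points of a Boolean hypercube attached to each label-cover vertex and edges connect noisy copies. The design is arranged so that ``dictator'' sets---those essentially determined by a single coordinate---are exactly the small, non-expanding sets: a consistent labeling of the starting instance assembles such dictators into a single set of measure $\delta$ with small expansion, giving completeness. For soundness one runs a dictatorship test: using hypercontractivity of the noise operator together with the invariance principle, one argues that any set of measure $\delta$ whose expansion is below $1-\eta$ must be correlated with only a few coordinates, and therefore decodes to a labeling satisfying a large fraction of the constraints, contradicting the soundness of the starting instance.

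The main obstacle is precisely the reason this remains a conjecture rather than a theorem. Soundness must exclude \emph{all} small sets of expansion below $1-\eta$, for arbitrarily small $\eta$ with $\delta$ allowed to shrink as $\delta(\eta)$; the dictatorship test therefore has to certify near-perfect soundness, the regime in which the available Fourier-analytic machinery is weakest. The recent resolution of the $2$-to-$2$ Games Conjecture by Khot, Minzer, and Safra, via the Grassmann-graph and short-code framework, delivers related inapproximability results and small-set expansion bounds, but it does not reach the full gap $(\eta, 1-\eta)$ demanded here---in particular the near-$1$ soundness threshold. Bridging that last gap is exactly the crux, and I would not expect the plan above to close it; its value is in making transparent why the hypothesis is believable and what a genuine proof would have to supply.
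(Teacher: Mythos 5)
You are correct that this statement is the Small Set Expansion Hypothesis itself, which the paper states as a \emph{conjecture} and assumes without proof as the basis for its conditional hardness results (notably Theorem~\ref{thm:hard-densest-sseh}); there is no proof in the paper to compare against, and proving it unconditionally would resolve a major open problem. Your response---declining to claim a proof while accurately sketching the Raghavendra--Steurer Unique-Games connection, the dictatorship-test/hypercontractivity machinery, and the near-perfect-soundness barrier that keeps the hypothesis open---is exactly the right treatment of this statement.
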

%Without loss of generality, we assume that $\delta\leq 1/2$,
%otherwise we simply consider $\Phi(\overline{S})$ instead of $\Phi(S)$.

\subsection{General Case}
Given a weighted graph $G(V,E,w)$, the densest subgraph problem asks to
find a set of nodes $S$ such that the density $D_S:=\frac{|E_S|}{|S|}$ is maximized.
The fair densest subgraph problem additionally requires $S$ to be fair.

The densest at-most-$k$ subgraph problem asks to find a set of nodes $S$
with $|S|\le k$, such that the density $D_S:=\frac{|E_S|}{|S|}$ is
maximized.
Recall from Section~\ref{sec:related} that, whereas the densest subgraph
problem is polynomially solvable, the best approximation for the densest
at-most-$k$ subgraph problem is in $O(n^{1/4})$~\cite{BhaskaraCVGZ12} and cannot be approximated up to a factor of
$n^{1/(\log\log n)^c}$ for some $c>0$ assuming the exponential time hypothesis~\cite{Manurangsi17} . The next
theorem implies that these inapproximability results for the densest
at-most-$k$ subgraph problem hold also for the fair densest subgraph
problem, showing that fairness constraints can drastically affect 
hardness of this problem.

%\begin{theorem}
%The densest fair subgraph problem is $NP$-hard.
%\end{theorem}
%\begin{proof}
%We show hardness from the clique problem: Given an unweighted graph $G$, determine whether there exists an a subset $S$ of size $k$ such that $|E_S|= {k\choose 2}$.
%
%Our reduction now proceeds as follows. We color all the nodes of $G$ red, and add a $k$ blue nodes $B$ that form a clique. Denote the new graph by $G'$. We claim that $G$ has a clique of size $k$ if and only if the densest fair subgraph of $G'$ has density $\frac{{k\choose 2}}{k}$. ''If'' is obvious. To see only if, note that any fair subgraph of size $\ell<k$ can have density at most $\frac{2{\ell\choose 2}}{2\ell}=\frac{{\ell\choose 2}}{\ell}<\frac{{k\choose 2}}{k}$. If $G$ does not have a $k$-clique, then any fair subgraph of size $k$ will have at density most $\frac{{k\choose 2} -1 + {k\choose 2}}{2k} < \frac{{k\choose 2}}{k}$.
%\end{proof}

%\begin{theorem}
%Approximating the densest fair subgraph problem is at least as hard as approximating the densest at most $k$ subgraph problem.
%\end{theorem}
%\begin{proof}
%Consider an arbitrary graph $G(V,E)$. We consider $V$ to be colored red. Add $k$ blue nodes with no edges. Then the density of the fair densest subgraph is, up to a multiplicative factor of $\frac{1}{2}$, equal to the density of the densest at most $k$ subgraph.
%\end{proof}

\begin{theorem}
Computing an $\alpha$-approximation for the fair densest subgraph
problem is at least as hard as computing an $\alpha$-approximation for
the densest at-most-$k$ subgraph problem.
Moreover, any $\alpha$-approximation to the densest at-most-$k$ subgraph is an $2\alpha$ approximation to densest fair subgraph.
\end{theorem}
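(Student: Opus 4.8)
The statement bundles two opposite reductions, so the plan is to handle them separately. For the first direction---that an $\alpha$-approximation for fair densest subgraph yields one for densest at-most-$k$ subgraph---I would use a simple padding gadget. Given an instance $G=(V,E)$ of densest at-most-$k$ subgraph, build a colored graph $G'$ by coloring every vertex of $G$ red and adding $k$ fresh isolated blue vertices. Every fair subset of $G'$ then consists of $j$ red vertices together with $j$ blue vertices for some $j\le k$, and since the blue vertices carry no edges its density equals $\frac{|E\cap(S\times S)|}{2j}$, i.e.\ exactly half the density of those $j$ red vertices viewed as a subgraph of $G$. Thus the red part of any fair subset is a feasible at-most-$k$ subgraph, and conversely any at-most-$k$ subgraph of size $j$ extends to a fair subset by appending $j$ blue vertices, which gives $OPT_{\mathrm{fair}}(G')=\tfrac12\,OPT_{\le k}(G)$. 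I would then observe that discarding the blue padding from an $\alpha$-approximate fair subset yields an at-most-$k$ subgraph of density $2D_S\ge \tfrac{1}{\alpha}OPT_{\le k}(G)$ and of size at most $k$, establishing the hardness transfer at the same factor $\alpha$.

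For the second direction---that an $\alpha$-approximation for densest at-most-$k$ subgraph gives a $2\alpha$-approximation for fair densest subgraph---let $S^*$ be an optimal fair subgraph with $|S^*|=2r$ and $D_{S^*}=\rho^*$. The plan is to guess $2r$ by trying all sizes, note that $S^*$ is itself a feasible at-most-$(2r)$ subgraph so that the assumed algorithm returns a (not necessarily fair) subgraph $T$ with $|T|\le 2r$ and $D_T\ge \rho^*/\alpha$, and finally convert $T$ into a fair subgraph while losing only a factor $2$. For the conversion, if $T$ has $r_T$ red and $b_T\le r_T$ blue vertices, I would keep all of $T$ and pad the minority color with $r_T-b_T$ further blue vertices. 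This destroys no edge and at most doubles the vertex count, so the resulting fair set $U$ satisfies $D_U\ge \frac{|E_T|}{2r_T}\ge \tfrac12 D_T\ge \rho^*/(2\alpha)$, which is the claimed bound.

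The main obstacle is precisely this fairness-fixing step, which is subtler than it appears for two reasons. First, the dense subgraph $T$ may be (nearly) monochromatic, so there is no fair subgraph contained in $T$; the padding argument deliberately works from outside $T$ to sidestep this, but it succeeds only if $G$ contains enough spare minority-color vertices to pad with. Since the existence of $S^*$ guarantees only $r$ vertices of each color, padding is immediate when the majority color of $T$ has at most $r$ vertices, but when it has more I would instead trim the majority color back down to the size of the available minority supply, removing lowest-degree vertices first, and argue that density still drops by at most a factor of $2$. Checking that the trimming variant also honors the factor $2$, and combining the padding and trimming cases cleanly, is the one place requiring a careful (though routine) degree-counting calculation; everything else reduces to the two one-line inequalities $OPT_{\le 2r}\ge \rho^*$ and $D_U\ge D_T/2$.
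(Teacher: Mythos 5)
Your two reductions are essentially the paper's own: for the hardness direction the paper uses exactly your padding gadget (color $V$ red, add $k$ isolated blue nodes, observe that fair subsets correspond to at-most-$k$ subgraphs at half the density), and for the $2\alpha$ upper bound the paper simply points to the argument of Theorem~\ref{thm:approxalg}, which is your ``pad the minority color from outside until balanced, size at most doubles, density drops by at most $2$'' step. So on the main path you and the paper agree.

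The place where you go beyond the paper is also the place where your argument is not yet sound. You correctly notice that the padding step needs a sufficient supply of minority-color vertices; the paper sidesteps this because Theorem~\ref{thm:approxalg} is stated for \emph{fair} graphs, where $|Blue|=|Red|=n/2$ always exceeds the majority count inside any candidate set, so padding never fails. Your proposed repair for the general case --- trim the majority color of $T$ down to the available minority supply, removing low-degree vertices first, and claim density drops by at most a factor $2$ --- does not hold as stated relative to $D_T$: if $T$ is a clique on $2B$ red vertices and only $B$ blue vertices exist in $G$, trimming to $B$ red and padding with $B$ blue leaves density about $D_T/4$, not $D_T/2$. (The final set happens to still be near-optimal in that example, but only because $\rho^*$ itself is small; the inequality $D_U\ge D_T/2$ you rely on is false, so the ``routine degree-counting'' you defer is where the real work lies, and it would have to compare against $\rho^*$ directly rather than against $D_T$.) The clean fix is to do what the paper implicitly does: prove the $2\alpha$ claim under the fair-graph hypothesis of Theorem~\ref{thm:approxalg}, where your padding branch is the only case that arises.
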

\begin{proof}
Consider an arbitrary graph $G(V,E)$. We consider $V$ to be colored red.
Add $k$ blue nodes with no edges. Then the density of the fair densest
subgraph is, up to a multiplicative factor of exactly $\frac{1}{2}$, equal to
the density of the densest at most $2 k$ subgraph.

The argument for the upper bound is completely analogous to the proof given for Theorem~\ref{thm:approxalg}.
\end{proof}

When the input graph $G$ is itself fair, we can provide stronger bounds.

\begin{algorithm}
\caption{Approximate Fair Densest Subgraph}
\label{alg:adfsg}
\let\oldnl\nl% Store \nl in \oldnl
%\newcommand{\nonl}{\renewcommand{\nl}{\let\nl\oldnl}}% Remove line number for one line
%\SetKwInOut{Input}{Input}
%\SetKwInOut{Output}{Output}
{\bf Input:} Graph $G(V,E,w)$\\
1: Compute the densest subgraph $S$\\
2: W.l.o.g $|S\cap Blue|\geq |S\cap Red|$\\ %$ using Algorithm~\ref{alg:mergereduce}\\
3: While $|S\cap Blue| > |S\cap Red|$, add an arbitrary node $v\in
Red\setminus S$ to $S$\\
%4: While $|S\cap Blue| > |S\cap Red|$, remove $v\in S\cap Blue$ with minimal $w(\{v\}\times S)$.\\
4: Return $S$
\end{algorithm}

\begin{theorem}
\label{thm:approxalg}
Given a fair graph $G(V,E,w)$, Algorithm~\ref{alg:adfsg} computes a fair
set $S\subset V$, such that $D_S\cdot 2\geq OPT$, where $OPT$ is the
density of the fair densest subgraph.
Similarly, any $\alpha$-approximation to densest at-most-$k$ subgraph is an $2\alpha$ approximation to densest fair subgraph.
\end{theorem}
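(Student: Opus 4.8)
The plan is to isolate a single \emph{balancing lemma} and then apply it twice, once for each sentence of the statement. The lemma is the following: given any node set $T$ with, say, $|T\cap \textit{Blue}|=b\ge r=|T\cap\textit{Red}|$, suppose we augment $T$ into a fair set $S$ by adding $b-r$ arbitrary red nodes drawn from $\textit{Red}\setminus T$ (exactly what lines~3--4 of Algorithm~\ref{alg:adfsg} do). Then $S\supseteq T$, so $E_T\subseteq E_S$ and the edge count does not decrease, while the size grows only to $|S|=2b\le 2(b+r)=2|T|$. Consequently $D_S=\frac{|E_S|}{|S|}\ge\frac{|E_T|}{2|T|}=\tfrac12 D_T$: balancing costs at most a factor $2$ in density. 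I would state this as a short standalone claim, since both halves of the theorem reduce to it.

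For the first sentence, I would let $S^\ast$ be the unconstrained densest subgraph computed in line~1, with density $D^\ast$. Because the fair densest subgraph is itself a feasible solution of the unconstrained problem, its optimum satisfies $OPT\le D^\ast$. Algorithm~\ref{alg:adfsg} then balances $S^\ast$, and the balancing lemma gives $D_S\ge\tfrac12 D^\ast\ge\tfrac12 OPT$, i.e.\ $2D_S\ge OPT$. The one point needing care is feasibility of line~3: we must have enough spare red nodes, $b-r\le|\textit{Red}\setminus S^\ast|$. This is where the hypothesis that $G$ is fair enters, since $|\textit{Red}|=|\textit{Blue}|$ and $b\le|\textit{Blue}|$ together give $b\le|\textit{Red}|$, leaving at least $b-r$ unused red nodes.

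For the second sentence I would run the given $\alpha$-approximation for densest at-most-$k$ subgraph and postprocess by balancing. Let the optimal fair subgraph have $2t$ vertices ($t$ of each color) and density $OPT$; it is a feasible at-most-$2t$ solution, so the densest at-most-$2t$ subgraph has density at least $OPT$, and the $\alpha$-approximation returns a set $T$ with $|T|\le 2t$ and $D_T\ge OPT/\alpha$. Applying the balancing lemma to $T$ yields a fair set of density at least $D_T/2\ge OPT/(2\alpha)$, the claimed $2\alpha$-approximation. Since $t$ is not known in advance, I would sweep $k=1,\dots,n$, balance each candidate, and return the densest fair set found; the sweep includes $k=2t$, so the guarantee is preserved.

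The main obstacle is not the density arithmetic, which is routine once the lemma is in place, but the bookkeeping around feasibility of the balancing step and the exact size blow-up. One must verify that the deficient color always has enough spare nodes (guaranteed by fairness of $G$ in part one, and by the existence of a fair optimum of size $2t$ in part two) and that the size at most doubles rather than growing uncontrollably; both reduce to the inequality $2\max(b,r)\le 2(b+r)$, but these are precisely the places where a careless argument would lose the tight factor of $2$.
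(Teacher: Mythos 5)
Your proposal is correct and follows essentially the same route as the paper: take the unconstrained densest subgraph (whose density upper-bounds $OPT$ since the fair optimum is feasible for the unconstrained problem), pad the minority color, and observe that the vertex count at most doubles while no edges are lost. Your explicit balancing lemma, the feasibility check that $G$'s fairness guarantees enough spare red nodes, and the sweep over $k$ for the second sentence are all details the paper leaves implicit, but they do not change the argument.
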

\begin{proof}
We refer to the set $S$ computed after line $1$, and $3$ as $S_1$ an
$S_2$, respectively.
Since $S_1$ is the unconstrained densest subgraph, $D_{S_1}>OPT$. For
$S_2$, we observe that $|S_2| \leq S_1 + |S_1\cap Blue| - |S_1\cap Red|
\leq 2\cdot |S_1|$, hence 
\[D_{S_2} = \frac{w(E_{S_2})}{|S_2|} \geq \frac{w(E_{S_1})}{2|S_1|} \geq
\frac{OPT}{2}.\]
%For $S_3$, we observe that every removed vertex $v\in S_2\cap Blue$ satisfies $w(\{v\}\times S) < \frac{w(E_{S_2})}{|S_2|}$. Furthermore $|S_3| = 2|S_2\cap Red|$. Therefore
%
%\begin{eqnarray*}
%D_{S_3} &\geq &  \frac{w(E_{S_2}) - \left(|S_2\cap Blue|-|S_2\cap Red|\right)\cdot \frac{w(E_{S_2})}{|S_2|}}{2|S_2\cap Red|} \\
%&=& \frac{\left( |S_2\cap Blue| + |S_2\cap Red| - |S_2\cap Blue|+|S_2\cap Red|\right)\cdot \frac{w(E_{S_2})}{|S_2|}}{2\cdot|S_2\cap Red|} \\
%&=& \frac{2\cdot |S_2\cap Red|\cdot \frac{w(E_{S_2})}{|S_2|}}{2\cdot|S_2\cap Red|} = \frac{w(E_{S_2})}{|S_2|} \geq \frac{OPT}{2}
%\end{eqnarray*}
To conclude the proof, we observe that $S_2$ is always fair.
\end{proof}

We conclude this section by showing that approximating the fair densest
subgraph problem beyond a factor of $2$ is at least as hard as solving
$SSE(\eta,\delta)$. Therefore, barring a major algorithmic breakthrough,
Algorithm~\ref{alg:adfsg} is optimal. The proof is provided as
supplementary material and it is based on the following idea: 
In regular graphs, for a given set of nodes $S$, the
expansion $\Phi(S)$ is related to the density of~$S$. We can use this,
so that, given a graph $G$, we can carefully construct a colored graph
$G'$ such that finding the optimal fair densest subgraph in $G'$ gives
an estimate of the largest-expansion node set in $G$.

\begin{theorem}
\label{thm:hard-densest-sseh}
If SSEH holds, computing a $(2-\varepsilon)$ approximation of the
fair densest subgraph problem in fair graphs is $NP$-hard for any
$\varepsilon>0$.
\end{theorem}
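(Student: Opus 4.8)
The plan is to establish the hardness of $(2-\varepsilon)$-approximation by a reduction from $SSE(\eta,\delta)$, exploiting the connection between expansion and density in regular graphs that is sketched before the statement. Starting from a $d$-regular instance $G=(V,E)$ of small set expansion, I would construct a colored graph $G'$ whose fair densest subgraph encodes the low-expansion set we are looking for. The natural construction is to take two disjoint copies of $G$, coloring one copy red and the other blue, and then add edges between the copies so that a fair subset of $G'$ corresponds to choosing the same node set $S$ in both copies. The key quantitative point is that for a $d$-regular graph, the number of internal edges of $S$ satisfies $2|E_S| = d|S| - \partial(S)$, where $\partial(S)$ is the number of boundary edges; hence $D_S = \frac{|E_S|}{|S|} = \frac{d}{2}(1 - \Phi(S))$, so maximizing density over sets of a fixed size $\delta|V|$ is exactly equivalent to minimizing expansion.

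Next I would translate the two $SSE$ cases into the density gap. In the completeness case there exists $S$ with $|S| = \delta|V|$ and $\Phi(S)\le\eta$, so in $G'$ the corresponding fair set (taking the matching node set in both color classes) has density at least roughly $\frac{d}{2}(1-\eta)$. In the soundness case every set of that size has $\Phi(S)\ge 1-\eta$, forcing density at most about $\frac{d}{2}\eta$ on those sets. By taking $\eta$ small, the ratio between the completeness density $\frac{d}{2}(1-\eta)$ and the soundness density $\frac{d}{2}\eta$ can be pushed arbitrarily close to $2$ — more precisely, the gap should be engineered so that a $(2-\varepsilon)$-approximation algorithm would distinguish the two cases. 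Choosing $\eta = \eta(\varepsilon)$ sufficiently small relative to $\varepsilon$ makes $\frac{1-\eta}{\eta}$ exceed $\frac{2}{2-\varepsilon}$, which is the quantity a $(2-\varepsilon)$-approximation must be able to separate.

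The main obstacle, and the part requiring the most care, is controlling the size of the fair densest subgraph in $G'$: the $SSE$ gap only holds for sets of the specific size $\delta|V|$, whereas the densest subgraph problem is free to pick sets of \emph{any} size. I would therefore need to ensure the construction forces the optimal fair solution to have size close to $\delta|V|$ — for instance by attaching padding gadgets, or by scaling the added inter-copy edges and degrees so that sets of the wrong size are penalized and cannot achieve high density. Establishing that no set of a substantially different size can ``cheat'' the reduction and achieve spurious density is the crux; this typically forces the introduction of auxiliary structure (a weighting scheme or dummy vertices of controlled degree) and a case analysis over possible solution sizes. Once the size of any near-optimal solution is pinned to the target range, the density-to-expansion correspondence closes the argument, and a $(2-\varepsilon)$-approximation algorithm applied to $G'$ would solve $SSE(\eta,\delta)$, contradicting SSEH.
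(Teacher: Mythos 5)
There is a genuine gap, and a sanity check would have exposed it: your claimed completeness/soundness densities of $\tfrac{d}{2}(1-\eta)$ versus $\tfrac{d}{2}\eta$ would, for small $\eta$, give an \emph{unbounded} inapproximability gap, contradicting the $2$-approximation of Theorem~\ref{thm:approxalg} for fair graphs. The flaw is in the construction itself. With two colored copies of $G$, the whole vertex set $V'$ is fair and already has density about $\tfrac{d}{2}$ (for a $d$-regular graph, $D_V = \tfrac{d}{2}(1-\Phi(V)) = \tfrac{d}{2}$), which \emph{exceeds} your completeness value $\tfrac{d}{2}(1-\eta)$. So in both the completeness and the soundness case the fair densest subgraph is essentially the full graph, and there is no gap at all: a low-expansion set is not denser than the ambient regular graph, it is merely not much sparser. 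Your soundness bound of $\tfrac{d}{2}\eta$ only applies to sets of size exactly $\delta|V|$, and the densest-subgraph objective is free to ignore those. You flag this size-control issue as "the main obstacle" to be handled by unspecified padding gadgets, but it is not a technicality to be deferred --- it is where the entire content of the theorem, including the reason the constant is exactly $2$, lives. (Separately, your target threshold $\tfrac{2}{2-\varepsilon}$ is wrong; a gap reduction against a $(2-\varepsilon)$-approximation needs the completeness/soundness ratio to exceed $2-\varepsilon$ itself.)

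The paper's construction is asymmetric precisely to manufacture the gap: all of $G$ is colored red, $|V|$ isolated-by-default blue nodes are added, and a fixed blue subset $B$ of size $s=\delta|V|$ is made into a clique with edge weight $t=\tfrac{2d}{s-1}$, so that $B$ contributes a density "floor" of about $\tfrac{d}{2}$ to any fair set containing it. In the completeness case, pairing a low-expansion red set with $B$ stacks the red contribution $\approx\tfrac{d}{2}$ on top of the blue floor, giving total density $\approx d$; in the soundness case, one must show that \emph{no} fair set of \emph{any} size $k$ beats $\approx\tfrac{d}{2}$, which requires a four-case analysis over $k$ and, for $2s\le k\le \tfrac{2}{\mu}s$, a double-counting argument (Equation~\ref{eq:counting}) that lifts the expansion hypothesis from sets of size exactly $s$ to larger sets. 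The ratio $d : \tfrac{d}{2}$ is where the factor $2$ comes from, matching the upper bound. None of this machinery is present in your proposal, and the two-copies construction cannot be repaired into it without essentially rebuilding the paper's argument.
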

\begin{proof}
We consider the $SSE(\eta,\delta)$ problem, i.e. let $G(V,E,w)$ be a
$d$-regular graph and let $\eta\in (0,1)$ and $\delta = \delta(\eta)\in
(0,1/2]$ be constants that we will specify later.
For any set $S\subset V$ of size $s:=\delta\cdot |V|$, we have $w(E_S)
:= d\cdot s - \Phi(S)\cdot d \cdot s$.

We construct a colored graph $G'(V',E',w')$ by considering all nodes of
$G$ to be colored red, and by adding $|V|$ blue nodes. Of these nodes,
we select an arbitrary but fixed subset of $\delta\cdot |V|$ blue nodes
that we denote by $B$. Each edge in $E_B$ is weighted uniformly by
$t:=\frac{2\cdot d}{s-1}$. The remaining edges are weighted with $0$.

Recall that SSEH states that distinguishing between the two cases is
$NP$-hard.

\paragraph{Completeness} 
If there exists some $S\subset V$ of size $s$ with $\Phi(S)\leq \eta$, then 
\begin{equation*}
\label{eq:case1}
w(E_S) \geq (1-\eta)\cdot d\cdot s.
\end{equation*} 
Then the density of the fair subgraph induced by $S\cup B$ of size $2s$
satisfies
\begin{eqnarray}
\nonumber
D_{S\cup B} &=& \frac{w(E_S) + w(B)}{2|S|} \geq \frac{(1-\eta)\cdot d\cdot s + t\cdot {s \choose 2}}{2s}  \\
\label{eq:upper}
&=& \frac{(1-\eta)\cdot d + t \cdot \frac{s-1}{2}}{2} \geq (1-\eta)\cdot d.
\end{eqnarray}

\paragraph{Soundness} If for all $S\subset V$ of size $s$, $\Phi(S)\geq 1-\eta$, then 
\begin{equation}
\label{eq:case2}
w(E_S) \leq \eta \cdot d\cdot s.
\end{equation} 

%\marginpar{It should $\mu\cdot s$ if not wrong}
Denote the size of the fair densest subgraph $C$ by $k$. Further, let
$C_{red}=C\cap Red$.
We will distinguish between four basic cases: (1) $k<2 \mu\cdot s$, (2)
$2 \mu\cdot s \leq k < 2\cdot s$, (3) $2\cdot s \leq k <\frac{2}{\mu}s$,
and (4) $\frac{2}{\mu}s \leq k$, where $\mu>0$ is suitably small
constant specified later. 
We note that the cases (1) and (4) and (2) and (3) will turn out to be
somewhat symmetric, even if slightly different proofs are required in
every case.

%For every subset $S'\subset V$, we have $w(S')=\min (d\cdot |S'|, \eta\cdot d\cdot s)$, where the first term follows by regularity and second term follows from Equation~\ref{eq:case2}.

First, let $k< 2 \mu\cdot s$ and again let $B_{k}$ be an arbitrary
subset of $B$ of size $k$. Then

\begin{eqnarray}
\nonumber
D_{C_{red}\cup B_{k}} &\leq & \frac{d\cdot k + w(B_{k})}{2\cdot k} = \frac{d\cdot k + t\cdot {k \choose 2}}{2\cdot k}  \\
\nonumber
&=& \frac{d + t\cdot \frac{k-1}{2}}{2} = \frac{d}{2} + \frac{d\cdot (k-1)}{2(s-1)} \\
\label{eq:lower2}
&\leq &  \frac{d}{2} + \frac{d\cdot 2\mu}{2} \leq (1+2\mu)\frac{d}{2},
\end{eqnarray}
where the first inequality holds due to regularity.

Now, let $2 \mu\cdot s\leq k < 2 \cdot s $.
We have 
\begin{eqnarray}
\nonumber
D_{C_{red}\cup B_{k}} &\leq & \frac{\eta\cdot d\cdot s + w(B_{k})}{2\cdot k}  \\
\nonumber &=& \frac{\eta\cdot d\cdot s}{2\cdot k} + \frac{d\cdot (k-1)}{2\cdot(s-1)} \\
\label{eq:lower1}
&\leq &  \frac{\eta\cdot d}{\mu} + \frac{d}{2} \leq \left(1+\frac{2\eta}{\mu}\right)\frac{d}{2}.
\end{eqnarray}

Now, let $2\cdot s\leq k\leq \frac{2}{\mu}\cdot s$.
We will first show that 
\begin{equation}
\label{eq:counting}
w(C)\leq \frac{2}{\mu}\cdot \eta\cdot d\cdot k.
\end{equation}
For the sake of contradiction, assume that this is not the case.
The argument revolves around double counting $w(C)$.
There exist ${k \choose s}$ subsets of size $s$ of $C$. 
Observe that for any such subset $S'$ has weight $w(S')\leq \eta\cdot d\cdot s$ and hence
\[\sum_{S'\subset C~\wedge ~ |S'|=s} w(S') \leq \eta\cdot d\cdot s\cdot {k\choose s}.\]
At the same time, every (possibly $0$ valued) edge appears in ${k-2\choose s-2}$ of these subsets. Hence
\[\sum_{S'\subset C~\wedge ~ |S'|=s} w(S') = w(C)\cdot {k-2\choose s-2}> \frac{2}{\mu}\cdot \eta\cdot d\cdot k \cdot {k-2\choose s-2}.\]
Combining both equations, we have
\begin{align*}
\frac{2}{\mu}\cdot \eta\cdot d\cdot k \cdot {k-2\choose s-2} < \eta\cdot d\cdot s\cdot {k\choose s} \\
\Leftrightarrow
\frac{2}{\mu} < \frac{k\cdot(k-1)}{s\cdot(s-1)}\frac{s}{k} \leq \frac{2}{\mu},
\end{align*}
which is a contradiction.

%Let $S''\subset S'$ be a random subset of size $s$. Then $P[v\in S''] =\frac{s}{|S'|}$ and $w(E_{S''})\leq \eta\cdot d\cdot s$ due to Equation~\ref{eq:case2}. However,
%\begin{eqnarray*}
%\mathbb{E}[w(E_{S''})] &= & \sum_{(u,v)\in E_{S'}} w(u,v)\cdot \mathbb{P}[(u,v)\in S'] = \sum_{(u,v)\in E_{S'}} w(u,v)\cdot \frac{s^2}{|S'|^2} = w(E_{S'})\cdot \frac{s^2}{|S'|^2} \\
%&> & \frac{1}{\mu} \eta\cdot d\cdot \frac{s^2}{|S'|} \geq \eta\cdot d\cdot s.
%\end{eqnarray*}
%This is a contradiction to Equation~\ref{eq:case2}, i.e. we have
%\begin{equation}
%\label{eq:upperlarge}
%w(E_{S'}) \leq \frac{1}{\mu}\cdot \eta\cdot d\cdot |S'|.
%\end{equation}
Consider now the density of any fair cut containing $C\cup B_{k}$, where
$B_{k}$ contains $B$ and $k-s$ further arbitrary blue nodes. We have
\begin{eqnarray}
\nonumber
D_{C_{red}\cup B_{k}} &\leq & \frac{\frac{2\eta}{\mu}\cdot d\cdot k + t\cdot {s\choose 2}}{2\cdot k} \\
\label{eq:lower3}
&=&  \frac{\eta}{\mu}\cdot \frac{d}{2} + \frac{d\cdot s}{2 \cdot k} \leq  \left(1+\frac{2\eta}{\mu}\right)\cdot \frac{d}{2}.
\end{eqnarray}

Finally, consider the case $k>\frac{2}{\mu}s$.
Then the density of any fair cut containing $C\cup B_k$, where $B_{k}$
contains $B$ and $k-s$ further arbitrary blue nodes, is
\begin{eqnarray}
\nonumber
D_{C_{red}\cup B_{k}} &\leq & \frac{d\cdot k + t\cdot {s\choose 2}}{2\cdot k} \\
\label{eq:lower4}
&=&  \frac{d}{2} + \frac{d\cdot s}{2\cdot k} \leq  \leq  (1+2\mu)\frac{d}{2}.
\end{eqnarray}

We note that bounds from Equations~\ref{eq:lower2} and~\ref{eq:lower3}
and Equations~\ref{eq:lower1} and~\ref{eq:lower4} are identical.
For $\varepsilon<\frac{1}{4}$, we set $\mu = \frac{\varepsilon}{2}$,
$\eta\leq \frac{8}{3}\cdot\varepsilon^2$. Then the ratio between the
terms~\ref{eq:upper} and~\ref{eq:lower2} and the terms \ref{eq:upper}
and~\ref{eq:lower1} is at least $2-\varepsilon$.  Therefore,
approximating the fair densest subgraph problem beyond a factor of $2$
solves the $SSE(\eta,\delta)$ problem.
\end{proof}

 \section{Experimental Analysis}
\label{se:experiments}

Worst case bounds are often uninformative when compared with empirical behavior. Algorithm~\ref{alg:adfsg} is (assuming that the underlying graph is fair) theoretically optimal and therefore superior to the spectral recovery schemes. As we now describe, the empirical performance between these approaches paints the opposite picture.

\paragraph{Overview.}
To test the performances of our algorithms on real data we used two publicly available dataset: 
\PolBooksData~\cite{PolBookNetwork} and \AmazonData products 
metadata~\cite{ni-etal-2019-justifying}. 
Both (explicitly or implicitly) contain undirected 
unweighted graphs, whose nodes are products from the Amazon catalog, 
while an edge between two nodes exists if the corresponding products 
are frequently co-purchased by the same buyer. Moreover, for both datasets, each product belongs 
to exactly one category. 

We tested our methods in a scenario in which, given a 
(not necessary fair) labelled graph, 
our only interest lies in finding fair subgraphs with high density.
In this context, we are considering the density of the provided solution as 
a quality indicator: more dense reflects more quality.

For our experiments we used an Intel Xeon 2.4GHz 
with 24GB of RAM running Linux Ubuntu 18.04 LTS.

\paragraph{Datasets.} 
The \textsc{PolBooks} data set~\cite{PolBookNetwork} is an undirected 
unweighted graph\footnote{http://www.casos.cs.cmu.edu/computational\_tools/ datasets/external/polbooks/polbooks.gml.}, 
whose nodes represent books on US politics included 
in the Amazon catalog, while an edge between two books exists if both 
books are frequently co-purchased by the same buyer. Each book is 
further labeled depending on its political stance, possible labels 
being ``\textit{liberal}'', ``\textit{neutral}'', and ``\textit{conservative}''. 
For our experiments, we considered only the subgraph 
induced by ``\textit{liberal}'' and ``\textit{conservative}'' books, obtaining 92 nodes 
(49 of which were associated with a ``\textit{conservative}'' worldview, 43 with 
a ``\textit{liberal}'' worldview) for 362 edges in total.

The \AmazonData products metadata 
dataset~\cite{ni-etal-2019-justifying} contains descriptions for 15.5 
million Amazon 
products~\footnote{https://nijianmo.github.io/amazon/index.html}. For
a single product, we only considered the product id (\textit{asin} field), the category 
the product belongs to (\textit{main\_cat} field) and 
the set of frequently co-purchased products (\textit{also\_buy} field). It should be noted 
that in this dataset, each node belongs to exactly one (main) Amazon 
category so that, together, these three fields allow recovery of a 
large, undirected, labelled graph, with products as nodes, 
categories as labels and edges representing frequent co-purchasing product pairs.
For this data set, we leveraged the co-purchasing relation among 
products, to naturally extract undirected and unweighted, labelled 
graphs. In more detail, for each pair $(\ell_1, \ell_2)$ of Amazon main categories, 
we extracted the undirected subgraph induced by the subset of 
nodes of category $\ell_1$ ($\ell_2$) that have at least one neighbour from 
category $\ell_2$ ($\ell_1$). We did not consider graphs with fewer than 100 nodes. This 
way, we retrieved 292 subgraphs, with sizes ranging between 100 and 22046 nodes.

\paragraph{Algorithms.} We compared the performance of the following algorithms:

\noindent{{\bf 2-DFSG.}} The optimal 2-approximation 
algorithm (Algorithm~\ref{alg:adfsg}) based on Goldberg's optimal algorithm for the densest 
subgraph problem \cite{G84}, described in Section \ref{sec:problems}.

\medskip\noindent{\bf Spectral Algorithms.}
Following~\cite{KV99,McS01} and 
Theorem~\ref{thm:fair_spectral}, we ran a variety of eigenvector 
rounding algorithms. 
These are all
variants of 
a modified version of the General Sweep Algorithm (Algorithm~\ref{alg:ss}) used 
in the proof of Theorem \ref{thm:fair_spectral} 
that sorts the entries of the main eigenvector of $M$ 
four times (instead of a single one)
according to the following criteria: 
i) non-increasing; ii) non-decreasing; 
iii) non-increasing absolute values; iv.) non decreasing absolute values.
With these premises, we consider the 
following spectral algorithms. 
The first two are just the modified version of Algorithm~\ref{alg:ss} 
with different choices for $M$, while {\bf PS} and {\bf 
FPS} perform a slightly modified sweep that always affords a fair 
solution.

\noindent{\bf Single Sweep (SS).} 
This algorithm is simply (Algorithm~\ref{alg:ss}),
when all previously mentioned sorting criteria are used, 
with $M = A$ and $\Delta = 0$.

\noindent{{\bf Fair Single Sweep (FSS).} 
It is the execution of SS, 
this time on matrix $(I-ff^T)A(I-ff^T)$ instead of $A$.

\noindent{{\bf Paired Sweep (PS).}} Paired Sweep is a modification of  
SS in which the fairness constraint is satisfied by 
construction in each subgraph produced by the rounding algorithm. This is 
done by considering the subsets $V_{\textit{Red}}$ and 
$V_{\textit{Blue}}$ of the nodes, sorting each of them separately 
according to the values of the corresponding entries in the main 
eigenvector of $A$ and then, for each $s = 1,\ldots, 
\min({|V_{\textit{Red}}|, |V_{\textit{Blue}}|})$ considering the 
candidate set of nodes of cardinality $2s$ obtained by taking the first $s$ 
nodes from each ordered subset.

\noindent{{\bf Fair Paired Sweep (FPS).} It is the execution of PS, 
this time on matrix $(I-ff^T)A(I-ff^T)$ instead of $A$.

\subsection{Results}
\label{sec:experiments}

\begin{figure}[h!]
\begin{center}
\includegraphics[scale=0.4]{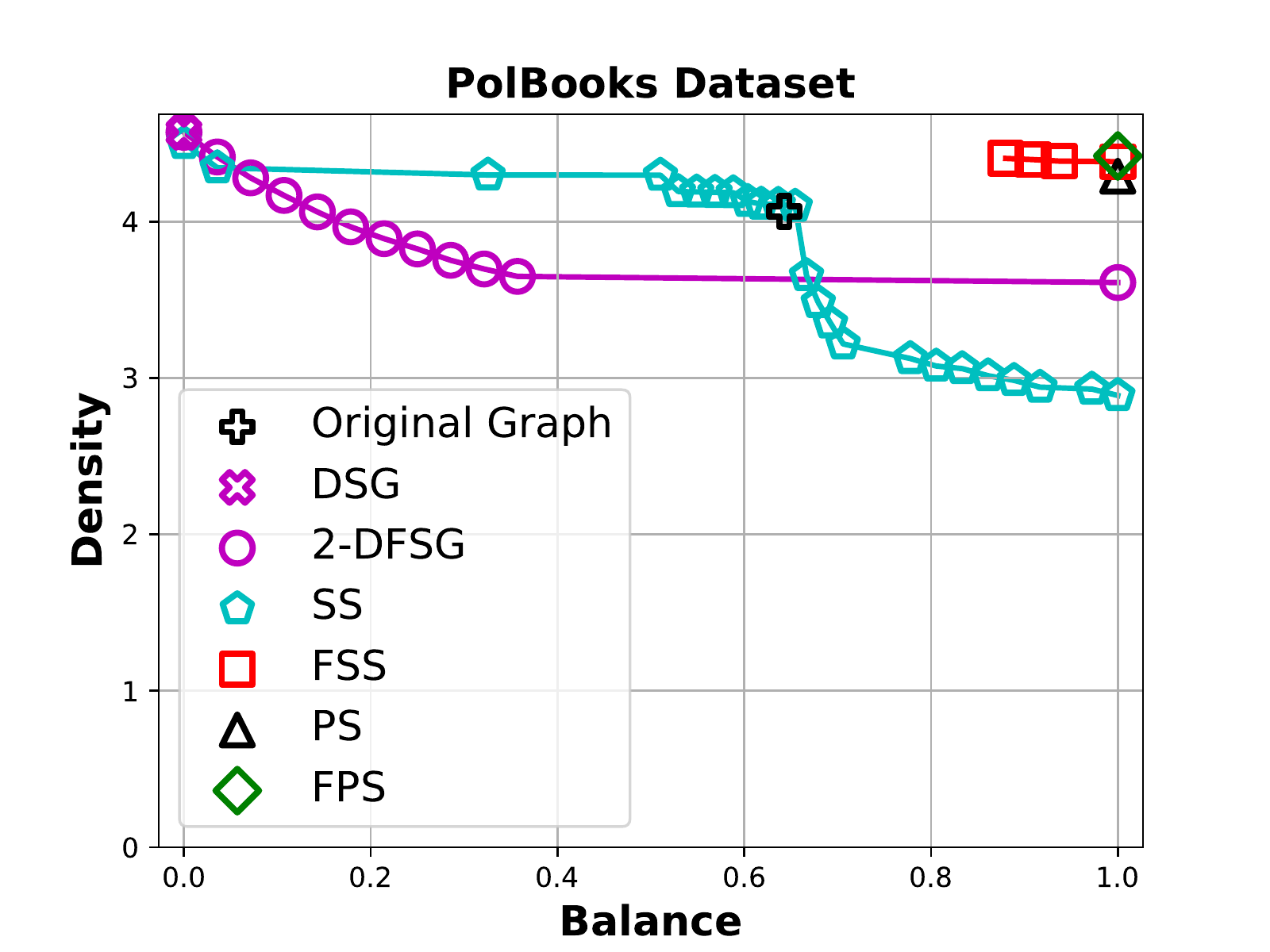}
\end{center}
\caption{Pareto front of the subgraphs generated by each algorithm,
w.r.t. density and balance, 
on \PolBooksData dataset.}
\label{fig:dfsg_exp_0}
\end{figure}

\begin{figure}[h!]
\begin{center}
\includegraphics[scale=0.4]{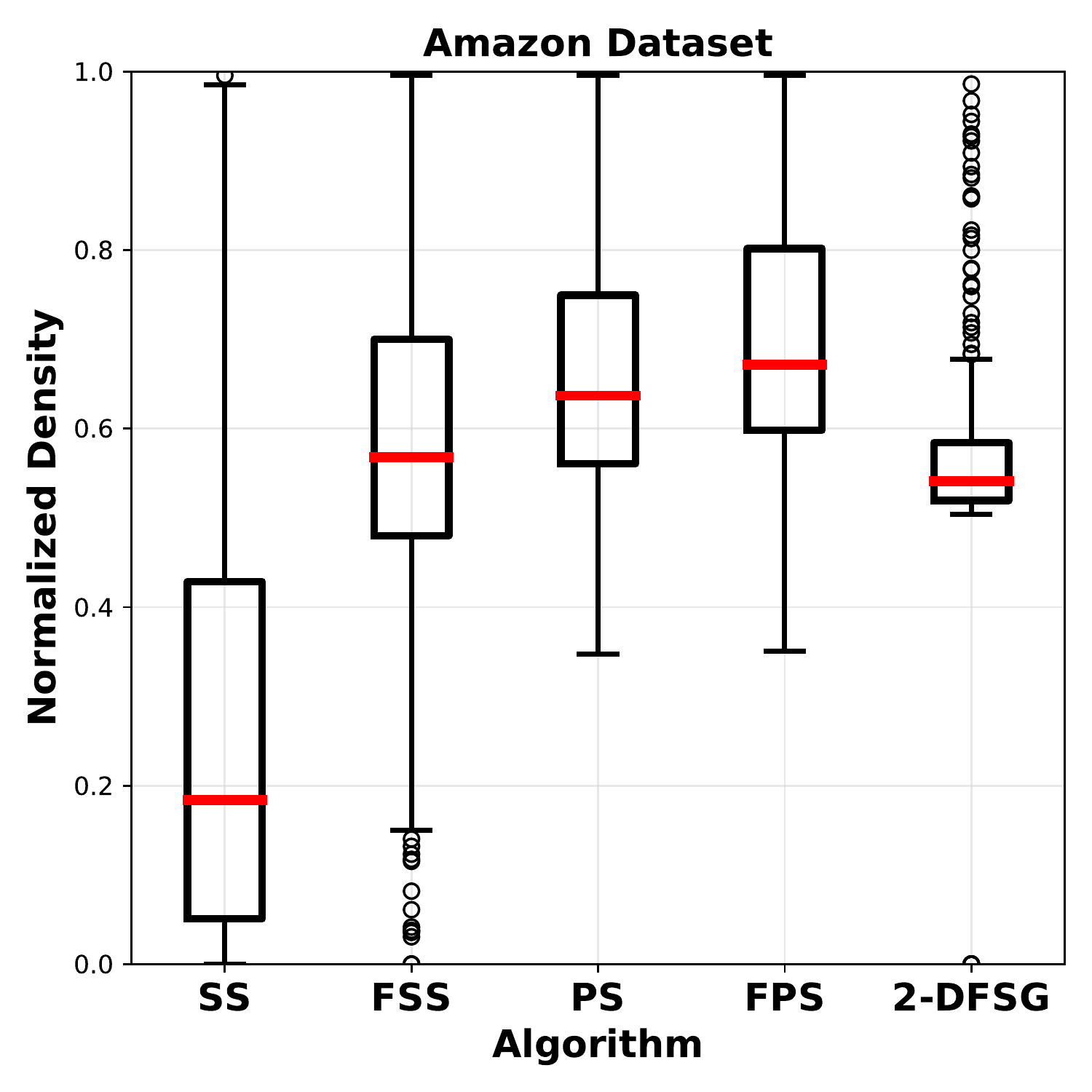}
\end{center}
\caption{Performance of our algorithms on \AmazonData dataset on 292 samples. Reported are aggregates over all generated subgraphs, with unfair solutions receiving a density of $0$, see Table~\ref{tab:fair}.}
\label{fig:dfsg_exp_1}
\end{figure}

Figure \ref{fig:dfsg_exp_0} shows the performance of our algorithms on \PolBooksData dataset 
through the Pareto front of the subgraphs generated by each algorithm during its execution w.r.t. density and balance~\footnote{Given two color classes \textit{Red} and \textit{Blue}, we define the 
\textit{balance} of a subgraph containing $x$ \textit{Red} and $y$ 
\textit{Blue} nodes as $\min\left(\frac{x}{y}, \frac{y}{x}\right)$.}. 
PS and FPS by construction only return fair solutions while the other algorithms potentially have trade-offs. In particular, 2-DSG (Algorithm~\ref{alg:adfsg}) starts at the unconstrained optimum and proceeds to add nodes that increase balance while potentially decreasing density.

Figure \ref{fig:dfsg_exp_1} shows the distributions of the normalized 
density, over the entire set of \AmazonData instances, of the fair subgraphs 
retrieved by different algorithms. Normalization, performed
to make solutions for different instances comparable, is done 
by scaling to the optimal density of the unconstrained 
problem.\footnote{Hence, the maximum possible value on the $y$-axis is 
$1$.}
With the exception of SS, which uses the original adjacency matrix and 
whose distribution is skewed toward lower density values, 
performances of spectral heuristics are comparable, with FPS achieving highest 
median density. In general spectral 
algorithms run on $(I-ff^T)A(I-ff^T)$ (FSS and FPS) respectively outperform their counterparts 
(SS and PS) run on $A$.
Interestingly, with the exception of SS, spectral heuristics 
consistently outperform 2-DFSG, despite its theoretical optimality. 

We report in Table \ref{tab:non_fair_solutions} the percentage of 
instances each algorithm is not able to solve, i.e., for which it does 
not return a fair solution and, consequently, 
we assigned a density equal to $0$.

\begin{table}[h!]
\begin{center}
\begin{tabular}{|c|c|c|c|c|}
\hline
\textbf{SS} & \textbf{FSS} & \textbf{PS} & \textbf{FPS} & \textbf{2-DFSG} \\ \hline
1.03 & 0.34 & 0 & 0 & 3.08 \\ \hline
\end{tabular}
\caption{Percentages of unfair solutions over 292 samples for \AmazonData dataset. As noted previously, PS and FPS cannot return unfair solutions. $2$-DFSG (Algorithm~\ref{alg:adfsg}) results in an unfair solutions if the original graph is unbalanced and the unconstrained densest subgraph cannot be made fair via line 3.}
\label{tab:non_fair_solutions}
\end{center}
\end{table}\label{tab:fair}

\section{Conclusion and Future Work}
In this work, we studied graphs with an arbitrary $2$-coloring. For these graphs, the densest fair subgraph problem consists of finding a subgraph with maximal induced degree under the condition that both colors occur equally often. We observed that the problem is closely related to the densest at most k subgraph problem and thus has similar strong inapproximability results.
On the positive side, we presented an optimal approximation algorithm under the assumption that the graph itself is fair, and a more involved spectral recovery algorithm inspired by the work of~\cite{KleindessnerSAM19} on stochastic block models.

In practice, the spectral recovery algorithm tended to dominate the approximation algorithm.
We interpret these results as showing that (1) an approximation algorithm 
may not be the correct way to attack this problem, and (2) 
as previous work also suggests \cite{SamadiTMSV18,KleindessnerSAM19}, spectral 
relaxations seem to be an inexpensive tool to improve the fairness of 
algorithms geared towards recovery and learning. 

Future work might consider extending this approach to more involved fairness constraints. As noted by~\cite{KleindessnerSAM19}, the ideal of removing ``unfairness'' via orthogonal projections straightforwardly generalized to multiple and potentially overlapping color classes. However, analyzing the spectrum in these cases seems to be significantly more difficult.

\newpage
\bibliographystyle{plain}
\bibliography{references}

\begin{thebibliography}{10}

\bibitem{PolBookNetwork}
Polbook-network-dataset, v. krebs, unpublished.
\newblock \url{http://www.orgnet.com/}.

\bibitem{BIOSVW19}
Arturs Backurs, Piotr Indyk, Krzysztof Onak, Baruch Schieber, Ali Vakilian, and
  Tal Wagner.
\newblock Scalable fair clustering.
\newblock In {\em Proceedings of the 36th International Conference on Machine
  Learning, {ICML} 2019, 9-15 June 2019, Long Beach, California, {USA}}, pages
  405--413, 2019.

\bibitem{BCN19}
Suman~Kalyan Bera, Deeparnab Chakrabarty, Nicolas Flores, and Maryam Negahbani.
\newblock Fair algorithms for clustering.
\newblock In {\em Advances in Neural Information Processing Systems 32: Annual
  Conference on Neural Information Processing Systems 2019, NeurIPS 2019, 8-14
  December 2019, Vancouver, BC, Canada}, pages 4955--4966, 2019.

\bibitem{BGKKRSS18}
Ioana~O. Bercea, Martin Gro{\ss}, Samir Khuller, Aounon Kumar, Clemens
  R{\"{o}}sner, Daniel~R. Schmidt, and Melanie Schmidt.
\newblock On the cost of essentially fair clusterings.
\newblock {\em CoRR}, abs/1811.10319, 2018.

\bibitem{BhaskaraCCFV10}
Aditya Bhaskara, Moses Charikar, Eden Chlamtac, Uriel Feige, and Aravindan
  Vijayaraghavan.
\newblock Detecting high log-densities: an
  \emph{O}(\emph{n}\({}^{\mbox{1/4}}\)) approximation for densest
  \emph{k}-subgraph.
\newblock In {\em Proceedings of the 42nd {ACM} Symposium on Theory of
  Computing, {STOC} 2010, Cambridge, Massachusetts, USA, 5-8 June 2010}, pages
  201--210, 2010.

\bibitem{BhaskaraCVGZ12}
Aditya Bhaskara, Moses Charikar, Aravindan Vijayaraghavan, Venkatesan
  Guruswami, and Yuan Zhou.
\newblock Polynomial integrality gaps for strong {SDP} relaxations of densest
  \emph{k}-subgraph.
\newblock In {\em Proceedings of the Twenty-Third Annual {ACM-SIAM} Symposium
  on Discrete Algorithms, {SODA} 2012, Kyoto, Japan, January 17-19, 2012},
  pages 388--405, 2012.

\bibitem{Boxell}
Levi Boxell, Matthew Gentzkow, and Jesse Shapiro.
\newblock Is the internet causing political polarization? evidence from
  demographics.
\newblock Technical report, mar 2017.

\bibitem{CHV18}
L.~Elisa Celis, Lingxiao Huang, and Nisheeth~K. Vishnoi.
\newblock Multiwinner voting with fairness constraints.
\newblock In {\em Proceedings of the Twenty-Seventh International Joint
  Conference on Artificial Intelligence, {IJCAI} 2018, July 13-19, 2018,
  Stockholm, Sweden.}, pages 144--151, 2018.

\bibitem{CSV18}
L.~Elisa Celis, Damian Straszak, and Nisheeth~K. Vishnoi.
\newblock Ranking with fairness constraints.
\newblock In {\em 45th International Colloquium on Automata, Languages, and
  Programming, {ICALP} 2018, July 9-13, 2018, Prague, Czech Republic}, pages
  28:1--28:15, 2018.

\bibitem{ChakrabortyDMG17}
Tanmoy Chakraborty, Ayushi Dalmia, Animesh Mukherjee, and Niloy Ganguly.
\newblock Metrics for community analysis: {A} survey.
\newblock {\em {ACM} Comput. Surv.}, 50(4):54:1--54:37, 2017.

\bibitem{CKLV17}
Flavio Chierichetti, Ravi Kumar, Silvio Lattanzi, and Sergei Vassilvitskii.
\newblock Fair clustering through fairlets.
\newblock In {\em Proceedings of the 30th Annual Conference on Neural
  Information Processing Systems (NIPS)}, pages 5036--5044, 2017.

\bibitem{FFMSV15}
Michael Feldman, Sorelle~A. Friedler, John Moeller, Carlos Scheidegger, and
  Suresh Venkatasubramanian.
\newblock Certifying and removing disparate impact.
\newblock In {\em Proceedings of the 21th {ACM} {SIGKDD} International
  Conference on Knowledge Discovery and Data Mining}, pages 259--268, 2015.

\bibitem{fratkin2006motifcut}
Eugene Fratkin, Brian~T Naughton, Douglas~L Brutlag, and Serafim Batzoglou.
\newblock Motifcut: regulatory motifs finding with maximum density subgraphs.
\newblock {\em Bioinformatics}, 22(14):e150--e157, 2006.

\bibitem{GibsonKT05}
David Gibson, Ravi Kumar, and Andrew Tomkins.
\newblock Discovering large dense subgraphs in massive graphs.
\newblock In {\em Proceedings of the 31st International Conference on Very
  Large Data Bases, Trondheim, Norway, August 30 - September 2, 2005}, pages
  721--732, 2005.

\bibitem{gionis2013piggybacking}
Aristides Gionis, Flavio Junqueira, Vincent Leroy, Marco Serafini, and Ingmar
  Weber.
\newblock Piggybacking on social networks.
\newblock {\em Proceedings of the VLDB Endowment}, 6(6):409--420, 2013.

\bibitem{G84}
A.~V. Goldberg.
\newblock Finding a maximum density subgraph.
\newblock Technical Report UCB/CSD-84-171, EECS Department, University of
  California, Berkeley, 1984.

\bibitem{HPNS16}
Moritz Hardt, Eric Price, and Nati Srebro.
\newblock Equality of opportunity in supervised learning.
\newblock In {\em Advances in Neural Information Processing Systems 29: Annual
  Conference on Neural Information Processing Systems 2016, December 5-10,
  2016, Barcelona, Spain}, pages 3315--3323, 2016.

\bibitem{Hor05}
Sujin Horwitz.
\newblock The compositional impact of team diversity on performance:
  Theoretical considerations.
\newblock {\em Human Resource Development Review}, 4:219--245, 06 2005.

\bibitem{HuangJV19}
Lingxiao Huang, Shaofeng~H.{-}C. Jiang, and Nisheeth~K. Vishnoi.
\newblock Coresets for clustering with fairness constraints.
\newblock In Hanna~M. Wallach, Hugo Larochelle, Alina Beygelzimer, Florence
  d'Alch{\'{e}}{-}Buc, Emily~B. Fox, and Roman Garnett, editors, {\em Advances
  in Neural Information Processing Systems 32: Annual Conference on Neural
  Information Processing Systems 2019, NeurIPS 2019, 8-14 December 2019,
  Vancouver, BC, Canada}, pages 7587--7598, 2019.

\bibitem{KV99}
Ravi Kannan and V~Vinay.
\newblock {\em Analyzing the structure of large graphs}.
\newblock 1999.

\bibitem{KleindessnerSAM19}
Matth{\"{a}}us Kleindessner, Samira Samadi, Pranjal Awasthi, and Jamie
  Morgenstern.
\newblock Guarantees for spectral clustering with fairness constraints.
\newblock In {\em Proceedings of the 36th International Conference on Machine
  Learning, {ICML} 2019, 9-15 June 2019, Long Beach, California, {USA}}, pages
  3458--3467, 2019.

\bibitem{LLT09}
Theodoros Lappas, Kun Liu, and Evimaria Terzi.
\newblock Finding a team of experts in social networks.
\newblock In {\em Proceedings of the 15th ACM SIGKDD International Conference
  on Knowledge Discovery and Data Mining}, KDD '09, pages 467--476, 2009.

\bibitem{Manurangsi17}
Pasin Manurangsi.
\newblock Almost-polynomial ratio eth-hardness of approximating densest
  k-subgraph.
\newblock In {\em Proceedings of the 49th Annual {ACM} {SIGACT} Symposium on
  Theory of Computing, {STOC} 2017, Montreal, QC, Canada, June 19-23, 2017},
  pages 954--961, 2017.

\bibitem{marcolino2013multi}
Leandro~Soriano Marcolino, Albert~Xin Jiang, and Milind Tambe.
\newblock Multi-agent team formation: diversity beats strength?
\newblock In {\em Twenty-Third International Joint Conference on Artificial
  Intelligence}, 2013.

\bibitem{McS01}
Frank McSherry.
\newblock Spectral partitioning of random graphs.
\newblock In {\em 42nd Annual Symposium on Foundations of Computer Science,
  {FOCS} 2001, 14-17 October 2001, Las Vegas, Nevada, {USA}}, pages 529--537,
  2001.

\bibitem{MuscoMT18}
Cameron Musco, Christopher Musco, and Charalampos~E. Tsourakakis.
\newblock Minimizing polarization and disagreement in social networks.
\newblock In {\em Proceedings of the 2018 World Wide Web Conference on World
  Wide Web, {WWW} 2018, Lyon, France, April 23-27, 2018}, pages 369--378, 2018.

\bibitem{ni-etal-2019-justifying}
Jianmo Ni, Jiacheng Li, and Julian McAuley.
\newblock Justifying recommendations using distantly-labeled reviews and
  fine-grained aspects.
\newblock In {\em Proceedings of the 2019 Conference on Empirical Methods in
  Natural Language Processing and the 9th International Joint Conference on
  Natural Language Processing (EMNLP-IJCNLP)}, pages 188--197, Hong Kong,
  China, November 2019. Association for Computational Linguistics.

\bibitem{NBGP18}
Alejandro Noriega{-}Campero, Michiel Bakker, Bernardo Garcia{-}Bulle, and Alex
  Pentland.
\newblock Active fairness in algorithmic decision making.
\newblock {\em CoRR}, abs/1810.00031, 2018.

\bibitem{RaghavendraS10}
Prasad Raghavendra and David Steurer.
\newblock Graph expansion and the unique games conjecture.
\newblock In {\em Proceedings of the 42nd {ACM} Symposium on Theory of
  Computing, {STOC} 2010, Cambridge, Massachusetts, USA, 5-8 June 2010}, pages
  755--764, 2010.

\bibitem{RaS10}
Prasad Raghavendra and David Steurer.
\newblock Graph expansion and the unique games conjecture.
\newblock In {\em Proceedings of the 42nd {ACM} Symposium on Theory of
  Computing, {STOC} 2010, Cambridge, Massachusetts, USA, 5-8 June 2010}, pages
  755--764, 2010.

\bibitem{RS18}
Clemens R{\"{o}}sner and Melanie Schmidt.
\newblock Privacy preserving clustering with constraints.
\newblock In {\em 45th International Colloquium on Automata, Languages, and
  Programming, {ICALP} 2018, July 9-13, 2018, Prague, Czech Republic}, pages
  96:1--96:14, 2018.

\bibitem{SamadiTMSV18}
Samira Samadi, Uthaipon~Tao Tantipongpipat, Jamie~H. Morgenstern, Mohit Singh,
  and Santosh Vempala.
\newblock The price of fair {PCA:} one extra dimension.
\newblock In {\em Advances in Neural Information Processing Systems 31: Annual
  Conference on Neural Information Processing Systems 2018, NeurIPS 2018, 3-8
  December 2018, Montr{\'{e}}al, Canada.}, pages 10999--11010, 2018.

\bibitem{SSS19}
Melanie Schmidt, Chris Schwiegelshohn, and Christian Sohler.
\newblock Fair coresets and streaming algorithms for fair k-means.
\newblock In {\em Approximation and Online Algorithms - 17th International
  Workshop, {WAOA} 2019, Munich, Germany, September 12-13, 2019, Revised
  Selected Papers}, pages 232--251, 2019.

\bibitem{SoG10}
Mauro Sozio and Aristides Gionis.
\newblock The community-search problem and how to plan a successful cocktail
  party.
\newblock In {\em Proceedings of the 16th {ACM} {SIGKDD} International
  Conference on Knowledge Discovery and Data Mining, Washington, DC, USA, July
  25-28, 2010}, pages 939--948, 2010.

\bibitem{TantipongpipatS19}
Uthaipon Tantipongpipat, Samira Samadi, Mohit Singh, Jamie~H. Morgenstern, and
  Santosh~S. Vempala.
\newblock Multi-criteria dimensionality reduction with applications to
  fairness.
\newblock In Hanna~M. Wallach, Hugo Larochelle, Alina Beygelzimer, Florence
  d'Alch{\'{e}}{-}Buc, Emily~B. Fox, and Roman Garnett, editors, {\em Advances
  in Neural Information Processing Systems 32: Annual Conference on Neural
  Information Processing Systems 2019, NeurIPS 2019, 8-14 December 2019,
  Vancouver, BC, Canada}, pages 15135--15145, 2019.

\bibitem{TRT11}
Binh~Luong Thanh, Salvatore Ruggieri, and Franco Turini.
\newblock k-nn as an implementation of situation testing for discrimination
  discovery and prevention.
\newblock In {\em Proceedings of the 17th {ACM} {SIGKDD} International
  Conference on Knowledge Discovery and Data Mining}, pages 502--510, 2011.

\bibitem{TBGGT13}
Charalampos~E. Tsourakakis, Francesco Bonchi, Aristides Gionis, Francesco
  Gullo, and Maria~A. Tsiarli.
\newblock Denser than the densest subgraph: extracting optimal quasi-cliques
  with quality guarantees.
\newblock In {\em The 19th {ACM} {SIGKDD} International Conference on Knowledge
  Discovery and Data Mining, {KDD} 2013, Chicago, IL, USA, August 11-14, 2013},
  pages 104--112, 2013.

\bibitem{ZVG17}
Muhammad~Bilal Zafar, Isabel Valera, Manuel Gomez~Rodriguez, and Krishna~P.
  Gummadi.
\newblock Fairness beyond disparate treatment \& disparate impact: Learning
  classification without disparate mistreatment.
\newblock In {\em Proceedings of the 26th International Conference on World
  Wide Web (WWW)}, pages 1171--1180, 2017.

\bibitem{ZVGG17}
Muhammad~Bilal Zafar, Isabel Valera, Manuel Gomez{-}Rodriguez, and Krishna~P.
  Gummadi.
\newblock Fairness constraints: Mechanisms for fair classification.
\newblock In {\em Proceedings of the 20th International Conference on
  Artificial Intelligence and Statistics, {AISTATS} 2017, 20-22 April 2017,
  Fort Lauderdale, FL, {USA}}, pages 962--970, 2017.

\bibitem{ZVGGW17c}
Muhammad~Bilal Zafar, Isabel Valera, Manuel Gomez{-}Rodriguez, Krishna~P.
  Gummadi, and Adrian Weller.
\newblock From parity to preference-based notions of fairness in
  classification.
\newblock In {\em Advances in Neural Information Processing Systems 30: Annual
  Conference on Neural Information Processing Systems 2017, 4-9 December 2017,
  Long Beach, CA, {USA}}, pages 228--238, 2017.

\end{thebibliography}

\end{document}